\newcommand{\N}{\mathbb{N}}
\newcommand{\R}{\mathbb{R}}
\newtheoremstyle{mystyle}
  {}
  {}
  {}
  {}
  {\bfseries}
  {.}
  { }
  {\thmname{#1}\thmnumber{ #2}\thmnote{ (#3)}}
\theoremstyle{mystyle}
\newtheorem{theorem}{Theorem}
\newtheorem{lemma}{Lemma}
\newtheorem{remark}{Remark}
\DeclareMathOperator*{\argmin}{argmin} 
\title{\LARGE \bf
Duality-based Convex Optimization for Real-time Obstacle Avoidance between Polytopes with Control Barrier Functions
}
\author{
Akshay Thirugnanam, Jun Zeng, and Koushil Sreenath
\thanks{This work was supported by National Science Foundation Grant CMMI-1931853.}
\thanks{All authors are  with Hybrid Robotics Group at the Department of Mechanical- Engineering, UC Berkeley, USA. \tt\small\{akshay\_t, zengjunsjtu, koushils\}@berkeley.edu}
\thanks{A descriptive video with animations can be found at \url{https://youtu.be/rpwre3IYPJE}.}
}
\begin{document}

\maketitle
\thispagestyle{empty}
\pagestyle{empty}

\begin{abstract}
Developing controllers for obstacle avoidance between polytopes is a challenging and necessary problem for navigation in tight spaces.
Traditional approaches can only formulate the obstacle avoidance problem as an offline optimization problem.
To address these challenges, we propose a duality-based safety-critical optimal control using nonsmooth control barrier functions for obstacle avoidance between polytopes, which can be solved in real-time with a QP-based optimization problem.
A dual optimization problem is introduced to represent the minimum distance between polytopes and the Lagrangian function for the dual form is applied to construct a control barrier function.
We validate the obstacle avoidance with the proposed dual formulation for L-shaped (sofa-shaped) controlled robot in a corridor environment.
We demonstrate real-time tight obstacle avoidance with non-conservative maneuvers on a moving sofa (piano) problem with nonlinear dynamics.
\end{abstract}
\section{Introduction}
\label{sec:introduction}
\subsection{Motivation}
Achieving safety-critical navigation for autonomous robots in an environment with obstacles is a vital problem in robotics research.
Recently, control barrier functions (CBFs) together with quadratic program (QP) based optimizations have become a popular method to design safety-critical controllers.
In this paper, we propose a novel duality-based approach to formulate the obstacle avoidance problem between polytopes into QPs in the continuous domain using CBFs, which could then be deployed in real-time.

\subsection{Related Work}
\subsubsection{Control Barrier Functions}
One approach to provide safety guarantees for obstacle avoidance in control problems is to draw inspiration from control barrier functions.
CBF-QPs~\cite{ames2019control} permit us to find the minimum deviation from a given feedback control input to guarantee safety.
The method of CBFs can also be generalized for high-order systems~\cite{nguyen2016exponential, xiao2019control}, discrete-time systems~\cite{agrawal2017discrete, zeng2021enhancing, almubarak2022safety} and input-bounded systems~\cite{zeng2021safety, agrawal2021safe, katriniok2021control, breeden2021high}.
It must be noted that early work on nonovershooting control in \cite{KrBe2006} could also have been used to obtain results similar to control barrier functions.
Specifically, CBFs are widely used for obstacle avoidance~\cite{yue2019quintic, srinivasan2020synthesis, huang2020switched, marley2021synergistic, he2021rule, marley2021maneuvering} with a variety of applications for autonomous robots, including autonomous cars~\cite{chen2017obstacle}, aerial vehicles~\cite{wu2016safety} and legged robots~\cite{hsu2015control}.
The shapes of robots and obstacles are usually approximated as points~\cite{chen2017obstacle}, paraboloids~\cite{ferraguti2020control} or hyper-spheres~\cite{zeng2020safety}, where the distance function can be calculated explicitly as an analytic expression from their geometric configuration.
The distance functions for these shapes are differentiable and can be used as control barrier functions to construct a safety-critical optimal control problem.

However, these approximations usually over-estimate the dimensions of the robot and obstacles, e.g., a rectangle is approximated as the smallest circle that contains it.
When a tight-fitting obstacle avoidance motion is expected, as shown in Fig.~\ref{fig:snapshots-sofa-problem}, robots and obstacles are usually approximated as polytopes.
While this makes maneuvers less conservative for obstacle avoidance, computing the distance between two polytopes requires additional effort~\cite{gilbert1988fast}.
Moreover, since this distance is not in an explicit form, it cannot be used directly as a CBF.
Furthermore, the distance between polytopes is non-differentiable~\cite{adda2001non}, which necessitates the use of nonsmooth control barrier functions (NCBFs) to guarantee safety~\cite{glotfelter2017nonsmooth}, ~\cite{glotfelter2018boolean}.

\begin{figure}
    \centering
    \includegraphics[width=0.99\linewidth]{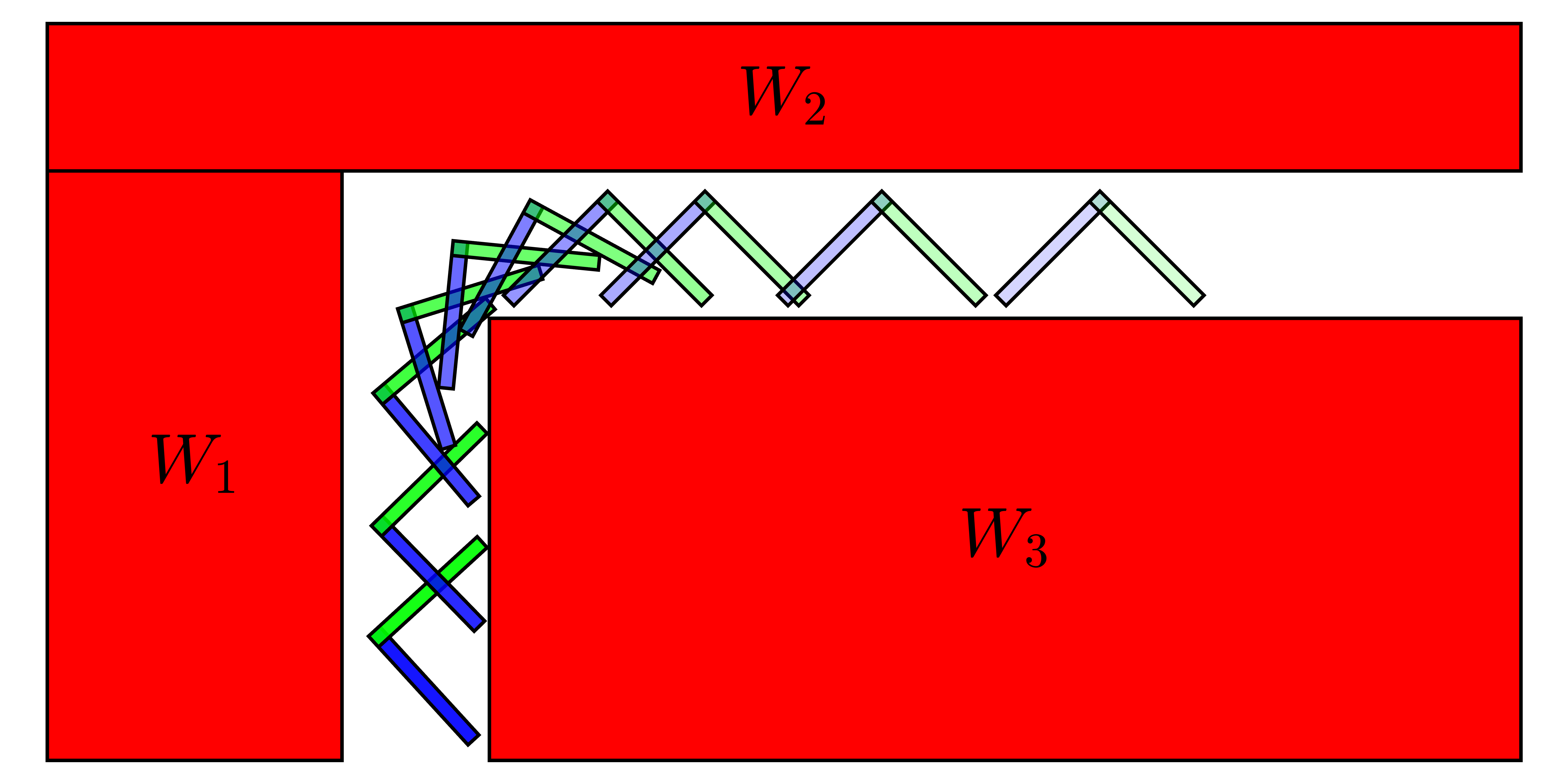}
    \caption{Snapshots of solving the moving sofa (piano) problem using our proposed formulation. It's shown that the controlled object can maneuver through a tight corridor whose width is smaller than the diagonal length of the controlled object, which cannot be achieved if we over-approximate these rectangle-shaped regions into spheres.}
    \label{fig:snapshots-sofa-problem}
    \vspace{-10pt}
\end{figure}

\subsubsection{Obstacle Avoidance between Polytopes}
We narrow our discussions about obstacle avoidance between polytopes into optimization-based approaches.
In~\cite{li2015unified}, obstacle avoidance between rectangle-shaped objects in an offline planning problem is studied, where collision avoidance is ensured by keeping all vertices of the controlled object outside the obstacle.
Generally, when controlled objects are polyhedral, the collision avoidance constraints can be reformulated with integer variables~\cite{grossmann2002review}.
This method applies well for linear systems using mixed-integer programming but cannot be deployed as real-time controllers for general nonlinear systems due to the complexity arising from integer variables.
The obstacle avoidance problem between convex regions could also be solved by using sequential programming~\cite{schulman2014motion}, where penalizing collisions with a hinge loss is considered through an offline optimization problem.

Recently, a duality-based approach~\cite{zhang2020optimization} was introduced to non-conservatively reformulate obstacle avoidance constraints as a set of smooth non-convex ones, which is validated on navigation problems in tight environments~\cite{zeng2020differential, shen2020collision, gilroy2021autonomous, firoozi2021formation}.
This idea does optimize the computational time compared with other ideas, but nonlinear non-convex programming is still involved.
Moreover, this approach can only be used for offline planning for nonlinear systems.
This philosophy has been extended into discrete-time control barrier functions (DCBFs) to enforce obstacle avoidance constraints between polytopes in real-time~\cite{thirugnanam2021safetycritical}.
However, the resulting DCBF formulation is still non-convex with non-convex DCBF or nonlinear system dynamics, and the computation time of the DCBF formulation might not be sufficiently low enough to be real-time.
On the other hand, a continuous-time formulation can result in a convex optimization formulation even for nonlinear systems, which leads to faster computation times.
Thus, continuous-time obstacle avoidance between polytopes requires a computationally efficient implementation, such as CBF-QPs and proper analysis on the nonsmooth nature of distance between polytopes to guarantee safety.
To summarize, real-time obstacle avoidance between polytopes with convex programming for general nonlinear systems is still a challenging problem.

\subsection{Contributions}

The contributions of this paper are as follows:

\begin{itemize}
    \item We propose a novel approach to reformulate a minimization problem for obstacle avoidance between polytopes for nonlinear affine systems into a duality-based quadratic program with CBFs.
    \item We establish the obstacle avoidance algorithm for the minimum distance between polytopes under a QP-based control law that guarantees safety, where the nonsmooth nature of the minimum distance between polytopes is resolved in the dual space.
    This formulation is used for real-time safety-critical obstacle avoidance.
    \item Our proposed algorithm demonstrates real-time obstacle avoidance at $\SI{50}{Hz}$ in the \textit{moving sofa (piano) problem}~\cite{howden1968sofa} with nonlinear dynamics, where an L-shaped controlled object can maneuver safely in a tight L-shaped corridor, whose width is less than the diagonal length of the controlled object.
\end{itemize}


\section{Background}
\label{sec:background}


We consider $N$ robots, with the $i$-th robot having states $x^i \in \mathcal{X} \subset \R^n$ and nonlinear, control affine dynamics:
\begin{equation} \label{eq:N-affine-systems}
\dot{x}^i(t) = f^i(x^i(t)) + g^i(x^i(t))u^i(t), \;\; i \in [N],
\end{equation}
where, $u^i(t) \in \mathcal{U} \subset \R^m$, $f^i: \mathcal{X} \rightarrow \R^n$ and $g^i: \mathcal{X} \rightarrow \R^{n\times m}$ are continuous, and $[N] = \{1,...,N\}$.
We assume $\mathcal{X}$ to be a connected set and $\mathcal{U}$ a convex, compact set.
While the dimensions of the states and inputs for each system can be different, we assume them to be the same across the systems for simplicity.
Throughout the paper, superscripts of variables denote the robot index and subscripts denote the row index of vectors or matrices.

\subsection{Closed-loop Trajectory for Discontinuous Inputs}
Polytopes have non-differentiable surfaces, and the minimum distance between polytopes could be non-differentiable at these points of non-differentiability.
Hence, enforcing safety constraints with the nonsmooth distance could result in the loss of continuity property of the feedback control.
Since $f^i$ and $g^i$ might not be Lipschitz continuous and $u^i(t)$ may be discontinuous, the solution of \eqref{eq:N-affine-systems} need not be unique or even well-defined.
In this case, to have a well-defined notion of a solution to \eqref{eq:N-affine-systems}, the dynamical system \eqref{eq:N-affine-systems} is turned into a differential inclusion.
Let $u^i: \mathcal{X} \rightarrow \mathcal{U}$ be some measurable feedback control law.
A valid solution for the closed loop trajectory is defined via the Filippov map \cite{glotfelter2018boolean} as
\begin{align} \label{eq:filippov-operator-def}
& \dot{x}^i(t) \in F[f^i+g^iu^i](x^i(t)) \\\nonumber
& := \text{co}\{\lim_{k\rightarrow \infty} (f^i{+}g^iu^i)(x_{(k)}): x_{(k)} \rightarrow x^i(t), x_{(k)} \notin \mathcal{Q}_f,\mathcal{Q}\}
\end{align}
where `co' stands for convex hull, $x_{(k)}$ denotes the $k$-th element of the sequence $\{x_{(k)}\}$, $\mathcal{Q}_f$ is a system-dependent zero-measure set, and $\mathcal{Q}$ is any zero-measure set.
The resulting map $F[f^i+g^iu^i]:\mathcal{X} \rightarrow 2^{\R^n}$ is a non-empty convex, compact, and upper semi-continuous set-valued map.
Here $2^{\R^n}$ denotes the power set of $\R^n$.
For a set-valued map $\Gamma: \mathcal{X} \rightarrow 2^{\R^n}$ to be upper semi-continuous at $a \in \mathcal{X}$, we require, $\forall \; \{a_{(m)}\} \in \mathcal{X}$ and $\{b_{(m)}\}$ such that $b_{(m)} \in \Gamma(a_{(m)})$, $\lim_{m\rightarrow \infty} a_{(m)} = a$ and $\lim_{m\rightarrow \infty} b_{(m)} = b \Rightarrow b \in \Gamma(a)$.
Then for all $x^i_0 \in \mathcal{X}$, a Filippov solution exists for the differential inclusion \eqref{eq:filippov-operator-def} with $x^i(0) = x^i_0$ \cite[Prop.~3]{cortes2008discontinuous}.
A Filippov solution on $[0,t]$ is an absolutely continuous map $x^i: [0,T] \rightarrow \mathcal{X}$ which satisfies \eqref{eq:filippov-operator-def} for almost all $t \in [0,T]$.
Throughout the paper, the term ``almost all" means for all but on a set of measure zero.

\subsection{Minimum Distance between Polytopes}
For robot $i$, we define the polytope $\mathcal{P}^i(x^i)$ as the $l$-dimensional physical domain associated to the robot at state $x^i \in \mathcal{X}$ with
\begin{equation} \label{eq:polytope-geometry}
\mathcal{P}^i(x^i) := \{z \in \R^l: A^i(x^i)z \leq b^i(x^i)\},
\end{equation}
where $A^i: \mathcal{X} \rightarrow \R^{r^i \times l}$ and $b^i: \mathcal{X} \rightarrow \R^{r^i \times 1}$ represent the half spaces that define the geometry of robot $i$ at some state, shown in Fig. \ref{fig:separating-plane-vector}.
We assume that the following properties hold for $A^i$ and $b^i$:
\begin{itemize}
    \item $\exists \; R, r > 0$ such that $\forall \; x^i \in \mathcal{X}, \; \exists \; c \in \R^l$ such that $B_r(c) \subset \mathcal{P}^i(x^i) \subset B_R(c)$, where $B_r(c)$ represents the open ball with radius $r$ centered at $c$.
    \item $A^i, b^i$ are continuously differentiable, and $\forall \; x \in \mathcal{X}$, the set of inequalities $A^i(x)z \leq b^i(x)$ does not contain any redundant inequality.
    \item $\forall \; x \in \mathcal{X}$, the set of active constraints at any vertex of $\mathcal{P}^i(x)$ are linearly independent.
\end{itemize}
The first assumption requires that the geometry of the robot $\mathcal{P}^i(x)$ be uniformly bounded with a non-empty interior for all $x \in \mathcal{X}$.
This assumption guarantees regularity conditions~\cite[Def.~2.3]{best1990stability} are met for minimum distance computations, which in turn guarantee the essential conditions of continuity~\cite[Thm.~2.3]{best1990stability} and differentiability~\cite[Thm.~2.4]{best1990stability} of minimum distance.
The last assumption requires that no more than $l$ half spaces intersect at any vertex.
As an example, a square pyramid in 3D does not satisfy this criterion.
Any polytope that does not satisfy this assumption can be tessellated into smaller polytopes, such as tetrahedra.

The square of the minimum distance between $\mathcal{P}^i(x^i)$ and $\mathcal{P}^j(x^j)$ is defined as $h^{ij}(x^i, x^j)$, where $h^{ij}: \mathcal{X}\times\mathcal{X} \rightarrow \R$ can be computed using the following QP:
\begin{equation}
\label{eq:min-dist-primal}
\begin{split}
    h^{ij}(x^i, x^j) := & \min_{\{z^i, z^j\}} \lVert z^i - z^j \rVert_2^2 \\
    \text{s.t.} \quad & A^i(x^i)z^i \leq b^i(x^i), \ A^j(x^j)z^j \leq b^j(x^j), \\
    &  z^i, z^j \in \R^l.
\end{split}
\end{equation}
Note that compared to the prior work on CBFs, 
the distance $h^{ij}$ is implicit and is a solution of a minimization problem.
By the regularity and smoothness assumptions on the polytopes, $h^{ij}$ is locally Lipschitz continuous~\cite[Lem.~1]{klatte1985stability}.
Variables $z^i, z^j \in \mathbb{R}^{l}$ denote points inside the polytopes $\mathcal{P}^i(x^i)$ and $\mathcal{P}^j(x^j)$ respectively.
Since the feasible sets of \eqref{eq:min-dist-primal} are non-empty (by assumption), convex, and compact, the solution to the QP \eqref{eq:min-dist-primal} always exists and is non-negative.
When the robots intersect with each other the minimum distance is uniformly zero, and there is no measure of penetration between the polytopes.

\subsection{Nonsmooth Control Barrier Functions}
For obstacle avoidance, we want to design a controller such that the minimum distance between any pair of robots $i$ and $j$ should be strictly greater than $0$.
We define a safe set of states $\mathcal{S}^{ij}$ as the  zero-superlevel set of the minimum distance between robots $i$ and $j$, $h^{ij}$ \cite{ames2019control},
\begin{equation} \label{eq:def-safe-set}
\mathcal{S}^{ij} := \{ (x^i,x^j): h^{ij}(x^i,x^j) > 0 \}^c,
\end{equation}
where $(\cdot)^c$ denotes closure of a set.
Note that since $h^{ij}(x^i,x^j) = 0$ for intersecting robot geometries, we use closure to obtain the correct safe set.
The closed-loop system is considered safe if $(x^i(t),x^j(t)) \in \mathcal{S}^{ij} \; \forall \; t \in [0, T]$, where $T$ is time till which the solution is defined.

Let $u^i(x^i), u^j(x^j) \in \mathcal{U}$ be feedback control laws that are measurable, with corresponding Filippov solutions $x^i(t), x^j(t)$ for $t \in [0,T]$.
Since $h^{ij}$ is locally Lipschitz and the state trajectories are absolutely continuous, $h^{ij}(t) := h^{ij}(x^i(t),x^j(t))$ is an absolutely continuous function, and is thus differentiable at almost all $t \in [0,T]$.
\begin{lemma}~\cite[Lem.~2]{glotfelter2017nonsmooth} \label{lem:nbf-def}
Let $\alpha:\R \rightarrow \R$ be a locally Lipschitz class-$\mathcal{K}$ function.
If
\begin{equation} \label{eq:nbf-constraint}
\dot{h}^{ij}(t) \geq -\alpha(h(t))
\end{equation}
for almost all $t \in [0,T]$ and $h(0) > 0$, then $h(t) > 0 \; \forall \; t \in [0,T]$, making the system safe.
\end{lemma}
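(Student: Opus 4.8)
The plan is to reduce the claim to a standard comparison argument against the scalar autonomous ODE $\dot{y} = -\alpha(y)$. First I would introduce the comparison system $\dot{y}(t) = -\alpha(y(t))$ with $y(0) = h(0) > 0$. Since $\alpha$ is locally Lipschitz, this ODE has a unique solution on its maximal interval of existence. Because $\alpha$ is class-$\mathcal{K}$ we have $\alpha(0) = 0$, so the constant function $0$ is also a solution; uniqueness then prevents the solution with $y(0) > 0$ from reaching $0$ at any finite time, since otherwise it would have to coincide with the zero solution. While $y > 0$ one has $\dot{y} = -\alpha(y) < 0$, so $y$ is non-increasing and confined to the compact set $[0, h(0)]$; hence there is no finite escape time and $y$ is defined on all of $[0,T]$, with $y(t) \in (0, h(0)]$ there.

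Next I would establish the comparison inequality $h(t) \ge y(t)$ for all $t \in [0,T]$. Both $h = h^{ij}$ (restricted to the trajectory) and $y$ are absolutely continuous, hence so is $w := h - y$, and combining \eqref{eq:nbf-constraint} with $\dot{y} = -\alpha(y)$ gives
\[ \dot{w}(t) = \dot{h}(t) - \dot{y}(t) \ge -\alpha(h(t)) + \alpha(y(t)) \quad \text{for almost all } t \in [0,T]. \]
Suppose, for contradiction, that $w(t^\star) < 0$ for some $t^\star \in (0,T]$. Put $t_0 := \sup\{ t \in [0,t^\star] : w(t) \ge 0 \}$; since $w$ is continuous and $w(0) = 0$, this set is closed and nonempty, so $w(t_0) = 0$ and $w(t) < 0$, i.e.\ $h(t) < y(t)$, for all $t \in (t_0, t^\star]$. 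Because $\alpha$ is increasing, $-\alpha(h(t)) + \alpha(y(t)) \ge 0$ on that interval, so $\dot{w}(t) \ge 0$ for almost all $t \in (t_0, t^\star)$; as $w$ is absolutely continuous, this forces $w$ to be non-decreasing on $[t_0, t^\star]$, whence $0 = w(t_0) \le w(t^\star) < 0$, a contradiction.

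Combining the two steps yields $h(t) \ge y(t) > 0$ for all $t \in [0,T]$, so $(x^i(t), x^j(t))$ stays in the zero-superlevel set of $h^{ij}$ and therefore never leaves the safe set $\mathcal{S}^{ij}$.

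I expect the main obstacle to be the weak regularity available in the comparison step: $h$ is only absolutely continuous and the differential inequality \eqref{eq:nbf-constraint} holds merely almost everywhere, so the naive "first crossing time" argument (comparing $\dot{h}$ and $\dot{y}$ pointwise at $t_0$) is not directly available. The remedy is to argue entirely through $w = h - y$, using only that $w$ is absolutely continuous together with the a.e.\ sign of $\dot{w}$ on the interval where $h < y$, plus monotonicity of $\alpha$; the local Lipschitz hypothesis on $\alpha$ is then needed only to secure uniqueness — and hence the strict positivity — of the comparison trajectory $y$.
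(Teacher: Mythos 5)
Your proof is correct, and it is the standard comparison-lemma argument: the paper itself gives no proof of this statement, deferring entirely to the citation of \cite[Lem.~2]{glotfelter2017nonsmooth}, whose proof proceeds in essentially the same way (comparison against the scalar ODE $\dot{y}=-\alpha(y)$, with local Lipschitzness of $\alpha$ used only to keep the comparison trajectory strictly positive). Your handling of the a.e.\ differential inequality through the absolutely continuous difference $w=h-y$ and the fundamental theorem of calculus is exactly the right way to avoid the pointwise first-crossing pitfall, so there is nothing to add.
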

In this case the absolutely continuous function $h^{ij}(t)$ is called a Nonsmooth Control Barrier Function (NCBF), which is a generalization of Control Barrier Functions (CBFs) to nonsmooth functions.
The constraint \eqref{eq:nbf-constraint} is called the NCBF constraint.
In the following section we derive a safety-critical feedback control law that satisfies this property.

\begin{remark}
For simplicity of discussion, the later analysis will be illustrated for a pair of robots $i$ and $j$, and results will be generalized where necessary. Further, for simplicity of notation, we denote $x := (x^i,x^j)$, $h(x) := h^{ij}(x^i,x^j)$, $\mathcal{S} := \mathcal{S}^{ij}$, and $u := (u^i,u^j)$ for the pair of systems $i$ and $j$.
\end{remark}
\section{NCBFs for Polytopes}
\label{sec:dual-formulation}

In this section, we will illustrate a general approach for imposing NCBF constraints for polytopes.
In order to enforce the NCBF constraints, we need to be able to write $\dot{h}(t)$ explicitly in terms of $\dot{A}^i(t), \dot{A}^j(t), \dot{b}^i(t), \dot{b}^j(t)$, which in turn depend on $u$.
In the following sub-section, we attempt to construct an explicit formula for $\dot{h}(t)$.

\subsection{Primal Approach}
To explicitly compute $\dot{h}(t)$, we first need to show that it exists.
Let $x(t), t {\in} [0,T]$ be the Filippov solution corresponding to a feedback control law $u(x)$.
Let $\mathcal{O}(t)$ be the set of all optimal solutions to \eqref{eq:min-dist-primal} at time $t$.
For any pair of optimal solutions $(z^{*i}(t),z^{*j}(t)) \in \mathcal{O}(t)$, let $s^*(t) := z^{*i}(t)-z^{*j}(t)$ be the separating vector.
$s^*(t)$ is fundamental for the distance formulation between polytopes since it is related to the minimum distance $h^{ij}(t)=\lVert s^*(t)\rVert_2^2$.
We will next prove that $s^*(t)$ is unique, continuous, and right-differentiable.
\begin{lemma} \label{lem:unique-separating-vector}
For all $t {\in} [0,T]$, the separating vector $s^*(t)$ is unique for all pairs of primal optimal solutions $(z^{*i}(t),z^{*j}(t))$.
\end{lemma}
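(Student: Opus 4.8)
The plan is to exploit the fact that although the objective $\lVert z^i - z^j \rVert_2^2$ of the QP \eqref{eq:min-dist-primal} is not strictly convex in the joint variable $(z^i,z^j)$ (it is invariant under common translations of $z^i$ and $z^j$), it is the composition of the strictly convex function $s \mapsto \lVert s \rVert_2^2$ with the linear map $(z^i,z^j) \mapsto z^i - z^j$. Strict convexity of $\lVert \cdot \rVert_2^2$ then forces the value of this linear image — i.e., the separating vector — to be the same at every minimizer, even when the minimizing pair is not itself unique.

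Concretely, I would fix $t \in [0,T]$ and suppose $(z_1^{*i},z_1^{*j})$ and $(z_2^{*i},z_2^{*j})$ are two optimal solutions of \eqref{eq:min-dist-primal} at time $t$, with associated separating vectors $s_1 := z_1^{*i} - z_1^{*j}$ and $s_2 := z_2^{*i} - z_2^{*j}$. Since $\mathcal{P}^i(x^i(t))$ and $\mathcal{P}^j(x^j(t))$ are convex and the feasible set of \eqref{eq:min-dist-primal} is their Cartesian product, the midpoint $\bigl(\tfrac12(z_1^{*i}+z_2^{*i}),\,\tfrac12(z_1^{*j}+z_2^{*j})\bigr)$ is again feasible, and its objective value equals $\lVert \tfrac12(s_1+s_2)\rVert_2^2$. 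If $s_1 \neq s_2$, strict convexity of $\lVert \cdot \rVert_2^2$ yields $\lVert \tfrac12(s_1+s_2)\rVert_2^2 < \tfrac12\lVert s_1\rVert_2^2 + \tfrac12\lVert s_2\rVert_2^2 = h^{ij}(t)$, contradicting the optimality of $(z_1^{*i},z_1^{*j})$ and $(z_2^{*i},z_2^{*j})$. Hence $s_1 = s_2$, which is exactly the claimed uniqueness of $s^*(t)$.

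I do not expect a genuine obstacle here; the argument is the standard strict-convexity-of-the-value uniqueness argument, and the only points worth stating carefully are (i) that feasibility of the midpoint uses the product structure of the constraint set, and (ii) that this lemma asserts uniqueness of the difference $z^{*i} - z^{*j}$ only, not of the closest-point pair itself (indeed the pair can genuinely be non-unique, e.g.\ for parallel faces). The case of intersecting polytopes is subsumed automatically: there $h^{ij}(t) = 0$, so $\lVert s^*(t) \rVert_2^2 = 0$ and $s^*(t) = 0$ for every optimal pair.
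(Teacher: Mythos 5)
Your proof is correct and is essentially the paper's argument: the paper projects the feasible set onto the difference coordinate $s=z^i-z^j$ and invokes uniqueness of the minimizer of the strictly convex cost $\lVert s\rVert_2^2$ over the resulting convex compact set, while you simply unpack that same strict-convexity step explicitly via the midpoint contradiction. No gap; the two write-ups differ only in presentation.
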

	
\begin{proof}
The feasible set of \eqref{eq:min-dist-primal} is convex and compact, and the cost function $\lVert z^{i}-z^{j}\rVert_2^2$ is convex.
Projecting the feasible set onto the $z^{i}+z^{j} = 0$ surface, the resulting set is also convex and compact.
Note that $s = z^{i}-z^{j} \in \R^l$ represents the projected co-ordinates on the $z^{i}+z^{j} = 0$ surface.
The resulting cost function is $\lVert s \rVert_2^2$, is strictly convex, and therefore, the optimal solution $s^*(t)$ of the projected problem exists and is unique.
Since for any pair of primal optimal solutions $(z^{*i}(t),z^{*j}(t))$, $z^{*i}(t)-z^{*j}(t)$ is an optimal solution to the projected problem, $z^{*i}(t)-z^{*j}(t)$ equals $s^*(t)$ for all pairs of optimal solutions of \eqref{eq:min-dist-primal}.
\end{proof}

For all $(z^{*i}(t),z^{*j}(t)) \in \mathcal{O}(t)$, we define $\text{Act}^i(t) \subset [r^i]$ ($\text{Act}^j(t) \subset [r^j]$) as the set of indices of constraints that are active for all $z^{*i}(t)$ ($z^{*j}(t)$) for $\mathcal{P}^i(t)$ ($\mathcal{P}^j(t)$).
Then,
\begin{align} \label{eq:optimal-affine-spaces}
\text{Aff}^i(t) & := \{z^i: A^i_{\text{Act}^i(t)}(t)z^i = b^i_{\text{Act}^i(t)}(t)\} \\\nonumber
\text{Aff}^j(t) & := \{z^j: A^j_{\text{Act}^j(t)}(t)z^j = b^j_{\text{Act}^j(t)}(t)\}
\end{align}
represent two parallel affine spaces such that the minimum distance between robots $i$ and $j$ at $t$ is the distance between $\text{Aff}^i(t)$ and $\text{Aff}^j(t)$.
These affine spaces can be points, hyperplanes, or even the entire space if the two polytopes intersect.
An example is pictorially depicted in Fig. \ref{fig:separating-plane-vector}.

We now assume that for almost all $t \in [0,T]$, $\exists \; \epsilon > 0$ such that the $\text{dim}(\text{Aff}^i(t))$, $\text{dim}(\text{Aff}^j(t))$, and dimension of the othogonal subspace common between $\text{Aff}^i(t)$ and $\text{Aff}^j(t)$ are constant for $\tau \in [t,t+\epsilon)$.
This is true when the set of times when states of the system oscillate infinitely fast has zero-measure.
In practice the states of the system do not oscillate infinitely fast due to limited control frequency and inertia of the system.
Then, under this assumption, we have the following result:
\begin{lemma} \label{lem:differentiable-separating-vector}
The separating vector $s^*(t)$ is continuous and right-differentiable for almost all $t \in [0,T]$.
\end{lemma}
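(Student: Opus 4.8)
**Proof proposal for Lemma 3 (continuity and right-differentiability of $s^*(t)$).**

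The plan is to work locally on a small interval $[t, t+\epsilon)$ where, by the standing assumption, the dimensions of $\text{Aff}^i(\tau)$, $\text{Aff}^j(\tau)$, and of the common orthogonal complement are constant, and to reduce the minimum-distance problem to the distance between two parallel affine spaces. First I would establish continuity: since $h^{ij}$ is locally Lipschitz (cited from \cite{klatte1985stability}) and the state trajectory $x(\cdot)$ is absolutely continuous, $h^{ij}(t) = \lVert s^*(t)\rVert_2^2$ is continuous; but this only controls the norm of $s^*$, not its direction, so I additionally need the stability results on the optimal set. Using \cite[Thm.~2.3]{best1990stability} (continuity of the minimizer map under the regularity conditions guaranteed by the uniform inner/outer ball assumption), the selection $(z^{*i}(t), z^{*j}(t))$ can be taken continuous, and by Lemma~\ref{lem:unique-separating-vector} the difference $s^*(t) = z^{*i}(t) - z^{*j}(t)$ is the \emph{unique} value of this difference, hence $s^*$ is continuous wherever the optimizer map is.

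For right-differentiability, I would fix $t$ and argue on $[t, t+\epsilon)$. On this interval the active index sets $\text{Act}^i(\tau)$, $\text{Act}^j(\tau)$ can be taken locally constant (shrinking $\epsilon$): by the linear-independence-of-active-constraints assumption and the no-redundant-inequalities assumption, the active set cannot gain constraints as $\tau \downarrow t$ without changing the affine-space dimensions, which is forbidden. With the active sets frozen, $s^*(\tau)$ is characterized as the minimum-norm vector connecting the two affine spaces $\text{Aff}^i(\tau)$ and $\text{Aff}^j(\tau)$ from \eqref{eq:optimal-affine-spaces}. This is a smooth (rational) function of the data $A^i_{\text{Act}^i}(\tau), b^i_{\text{Act}^i}(\tau), A^j_{\text{Act}^j}(\tau), b^j_{\text{Act}^j}(\tau)$: concretely, $s^*(\tau)$ is the projection of any fixed feasible offset onto the orthogonal complement of the common direction space, and this projection is obtained by solving a linear system whose coefficient matrix has full rank (by the linear-independence assumption) and depends $C^1$ on $\tau$ through $A^i, b^i, A^j, b^j$, which are themselves $C^1$ in $x$ and hence — composed with the absolutely continuous trajectory, which has a one-sided derivative a.e. — admit a right derivative at almost every $t$. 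Differentiating this closed-form expression by the chain rule and Cramer's rule yields the right derivative $\dot{s}^*(t^+)$.

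The main obstacle is the step that lets me freeze the active sets on a one-sided neighborhood: a priori the optimal active set is only upper semi-continuous (it can \emph{shrink} in the limit), so I must rule out the situation where $\text{Act}^i(\tau)$ for $\tau > t$ is strictly smaller than $\text{Act}^i(t)$ in a way that still keeps the affine-space dimensions constant — and conversely rule out jitter where the active set oscillates. This is exactly where the standing assumption (constant dimensions of $\text{Aff}^i$, $\text{Aff}^j$, and the common orthogonal subspace on $[t, t+\epsilon)$) together with the linear-independence assumption does the work: constant $\dim \text{Aff}^i(\tau) = l - \lvert \text{Act}^i(\tau)\rvert$ forces $\lvert \text{Act}^i(\tau)\rvert$ constant, and linear independence plus continuity of $(A^i, b^i)$ then forces the index set itself to be locally constant (a constraint that is active and non-redundant at $t$ stays active on a one-sided neighborhood, since losing it would increase the dimension). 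Once the active sets are genuinely constant on $[t, t+\epsilon)$, the remainder is the routine chain-rule computation sketched above, and the "almost all $t$" qualifier simply inherits the a.e. differentiability of the absolutely continuous trajectory $x(\cdot)$ together with the zero-measure exceptional set in the standing assumption.
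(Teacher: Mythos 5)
Your proposal follows essentially the same route as the paper: on $[t,t+\epsilon)$ you characterize $s^*(\tau)$ as the minimum-norm vector between $\text{Aff}^i(\tau)$ and $\text{Aff}^j(\tau)$, obtain it as the solution of a linear system whose coefficient matrix has constant rank by the standing dimension assumption and depends differentiably on $\tau$, and then invoke the uniqueness from Lemma~\ref{lem:unique-separating-vector} to transfer right-differentiability (hence continuity) to $s^*$ itself. Your explicit argument that the active index sets can be frozen on a one-sided neighborhood spells out a step the paper's proof leaves implicit (the paper only asserts right-differentiability of the system matrix), but the substance of the argument is the same.
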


\begin{proof}
Let $t$ be a time when the dimensions of the affine spaces and null space is constant for $\tau \in [t,t+\epsilon)$.
The separating vector $s^*(\tau)$ is the unique vector from $\text{Aff}^i(\tau)$ to $\text{Aff}^j(\tau)$ that is perpendicular to both of them.
These three constraints (that define $s^*(\tau)$ and establish orthogonality of $s^*(\tau)$ to $\text{Aff}^i(\tau)$ and $\text{Aff}^j(\tau)$) can be written in the form of a system of linear equalities with the matrix right-differentiable at $\tau=t$.
By the assumption on constant dimensions of the above spaces  ($\text{Aff}^i(t), \text{Aff}^j(t)$, and their common orthogonal subspace), this matrix has constant rank for $\tau \in [t,t+\epsilon)$.
By Lem.~\ref{lem:unique-separating-vector}, there is at least one solution to this system $\forall \; \tau \in [t,t+\epsilon)$.
A right-differentiable solution to $s^*(\tau)$ can be found from this linear system using Gauss elimination.
Since $s^*(t)$ is unique by Lem.~\ref{lem:unique-separating-vector}, $s^*(t)$ must be right-differentiable, and thus continuous, at $t$.
\end{proof}

To impose the NCBF constraint we then compute $\dot{h}(t) = \lim_{\delta \rightarrow 0^+} 1/\delta ( h(t+\delta) - h(t) )$ as:
\begin{equation} \label{eq:h-derivative-primal}
\dot{h}(t^+) = \frac{d}{d\tau} \lVert z^{*i}(\tau) - z^{*j}(\tau)\rVert_2^2 \biggl|_{\tau=t^+}. 
\end{equation}
Although, $h(t) = \lVert s^*(t) \rVert^2_2$ is right-differentiable for almost all times, the primal optimal solutions $z^{*i}(t)$, $z^{*j}(t)$ could be non-differentiable or even discontinuous.
So, $\dot{h}(t)$ cannot be written explicitly as a minimization problem from \eqref{eq:h-derivative-primal}, since $\dot{z}^{*i}(t)$ and $\dot{z}^{*j}(t)$ need not be well-defined.
This leads us to consider the dual formulation instead.

As a motivation, consider enforcing the collision avoidance constraint $\mathcal{P}^i {\cap} \mathcal{P}^j {=} \emptyset$ explicitly.
Constraining the distance between any two points in $\mathcal{P}^i$ and $\mathcal{P}^j$ to be greater than $0$ is not sufficient, since they may not be the closest points.
This is due to the fact that \eqref{eq:min-dist-primal} is a minimization problem.
However, constraining a plane to separate $\mathcal{P}^i$ and $\mathcal{P}^j$ is sufficient to guarantee $\mathcal{P}^i {\cap} \mathcal{P}^j {=} \emptyset$ even if it is not the maximal separating plane.
The dual formulation of \eqref{eq:min-dist-primal} allows us to explicitly compute this separating plane constraint, which can be used in the NCBF constraint \eqref{eq:min-dist-primal}.
The reason for using the dual formulation instead of the primal is further elaborated upon later in Remark \ref{rem:duality-reason}.

\begin{figure}
    \centering
    \includegraphics[width=0.9\linewidth]{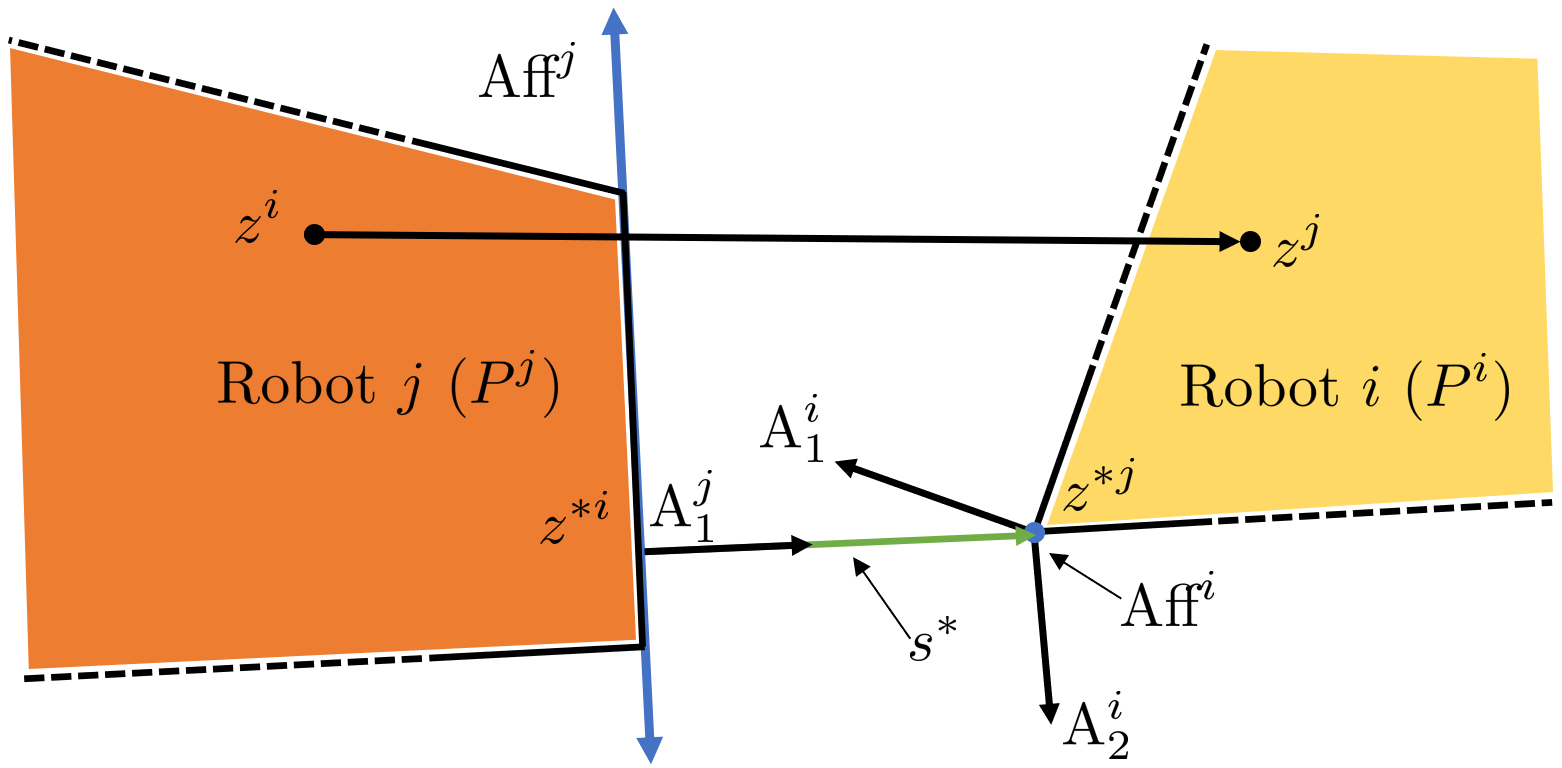}
    \caption{At any two configurations, the minimum distance between the robots $i$ and $j$ is the same as the minimum distance between the affine spaces Aff$^i$ and Aff$^j$, illustrated in blue.
    These spaces are affine extensions of some two faces of the robots.
    The points on robot $i$ and $j$ with the least distance are $z^{*i}$ and $z^{*j}$ respectively, and $s^*$ represents the vector with the smallest norm.
    The dynamics of the minimum distance between the robots is a hybrid system, with the discrete states being the pair of faces of the robots generating these affine spaces.
    For each discrete state, the minimum distance varies smoothly as the distance between the two affine spaces.}
    \label{fig:separating-plane-vector}
    \vspace{-10pt}
\end{figure}

\subsection{Dual Formulation} \label{subsec:ncbf-dual-formulation}



The dual program of a minimization problem is a maximization problem in terms of the corresponding dual variables.
For a quadratic optimization problem, as in \eqref{eq:min-dist-primal}, the dual program has the same optimal solution as that of \eqref{eq:min-dist-primal}.
So, differentiating the dual program will give $\dot{h}(t)$ as a maximization problem. Since any feasible solution to a maximization problem gives us a lower bound of the optimum, we use the dual problem of \eqref{eq:min-dist-primal} to get a lower bound of $\dot{h}(t)$. This enables us to express the NCBF constraint as a feasibility problem rather than an optimization problem.

To obtain the dual program, we first transform the constrained optimization problem \eqref{eq:min-dist-primal} to an unconstrained one by adding the constraints to the cost with weights $\lambda^i$ and $\lambda^j$, which are called the dual variables.
The unconstrained problem is optimized in terms of $z^i$ and $z^j$ to obtain the Lagrangian function $L(\lambda^i,\lambda^j)$ as
\begin{equation} \label{eq:lagrange-dual-func-expl}
L(\lambda^i, \lambda^j){=}{-}\frac{1}{4}\lambda^iA^i(x^i)A^i(x^i)^T\lambda^{i \; T}{-}\lambda^ib^i(x^i){-}\lambda^jb^j(x^j).
\end{equation}
The dual program is then defined as the maximization of the Lagrangian function.
\begin{lemma}
\label{lem:h-dual-qp}
The dual program corresponding to \eqref{eq:min-dist-primal} is:
\begin{equation}
\label{eq:min-dist-dual}
\begin{split}
    h(x) &= \max_{\{\lambda^i, \lambda^j\}} L(\lambda^i, \lambda^j) \\
    \text{s.t.} \quad & \lambda^iA^i(x^i) + \lambda^jA^j(x^j) = 0, \ \lambda^i, \lambda^j \geq 0.
\end{split}
\end{equation}
\end{lemma}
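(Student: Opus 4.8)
The plan is to derive \eqref{eq:min-dist-dual} directly as the Lagrange dual of the convex QP \eqref{eq:min-dist-primal} and then invoke strong duality to identify the dual optimal value with $h(x)$. I introduce row-vector multipliers $\lambda^i \in \R^{1\times r^i}$ and $\lambda^j \in \R^{1\times r^j}$ with $\lambda^i,\lambda^j \geq 0$ for the inequality constraints $A^i(x^i)z^i \leq b^i(x^i)$ and $A^j(x^j)z^j \leq b^j(x^j)$, and form the Lagrangian (suppressing the state arguments)
\begin{equation*}
\mathcal{L}(z^i,z^j,\lambda^i,\lambda^j) = \lVert z^i-z^j\rVert_2^2 + \lambda^i(A^i z^i - b^i) + \lambda^j(A^j z^j - b^j),
\end{equation*}
so that the Lagrange dual function is $L(\lambda^i,\lambda^j) = \inf_{z^i,z^j} \mathcal{L}(z^i,z^j,\lambda^i,\lambda^j)$.

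First I would compute this infimum explicitly. Since $\mathcal{L}$ is convex and differentiable in $(z^i,z^j)$, setting the gradients to zero gives $2(z^i-z^j) + A^{iT}\lambda^{iT} = 0$ and $-2(z^i-z^j) + A^{jT}\lambda^{jT} = 0$. Adding these two identities shows the infimum is $-\infty$ unless $\lambda^i A^i + \lambda^j A^j = 0$ — exactly the equality constraint in \eqref{eq:min-dist-dual} — since otherwise $\mathcal{L}$ is affine and unbounded below along a direction $z^i = z^j + t v$. When the constraint holds, the stationarity conditions give $z^i - z^j = -\tfrac{1}{2}A^{iT}\lambda^{iT}$. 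Substituting this back, and using the cancellation $\lambda^i A^i z^i + \lambda^j A^j z^j = (\lambda^i A^i + \lambda^j A^j) z^j + \lambda^i A^i(z^i-z^j) = \lambda^i A^i(z^i-z^j)$, the expression collapses to $L(\lambda^i,\lambda^j) = -\tfrac{1}{4}\lambda^i A^i A^{iT}\lambda^{iT} - \lambda^i b^i - \lambda^j b^j$, which is precisely \eqref{eq:lagrange-dual-func-expl}. Hence the dual program $\max L$ over $\{\lambda^i,\lambda^j \geq 0,\ \lambda^i A^i + \lambda^j A^j = 0\}$ is the claimed \eqref{eq:min-dist-dual}.

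It then remains to show this dual optimal value equals $h(x)$. Here I would appeal to strong duality: \eqref{eq:min-dist-primal} is a convex QP whose constraints are all affine and whose feasible set is nonempty (by the standing assumption that each $\mathcal{P}(x)$ has nonempty interior) and compact, so the primal optimum $h(x)$ is attained and the refined Slater condition for affine constraints is satisfied. Consequently the duality gap is zero and the KKT conditions are necessary and sufficient, which additionally produces a dual feasible point that attains the dual optimum; therefore $h(x) = \max L$ as stated in \eqref{eq:min-dist-dual}. The intersecting case is covered as well, since then $h(x) = 0$ and $\lambda^i = \lambda^j = 0$ is dual optimal.

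The main obstacle here is bookkeeping rather than conceptual: one must be careful with the row/column conventions for $\lambda^i,\lambda^j$ and verify the cancellation $\lambda^i A^i z^i + \lambda^j A^j z^j = \lambda^i A^i(z^i-z^j)$, which relies on the dual equality constraint even though $z^i \neq z^j$ in general; getting the coefficient $\tfrac{1}{4}$ (rather than $\tfrac{1}{2}$) in the quadratic term is the one place an algebra slip is likely. The strong-duality step is routine given the regularity assumptions already placed on the polytopes.
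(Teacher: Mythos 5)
Your proposal is correct and follows essentially the same route as the paper's proof in the appendix: form the Lagrangian, derive the equality constraint $\lambda^i A^i + \lambda^j A^j = 0$ from stationarity, substitute back to obtain \eqref{eq:lagrange-dual-func-expl}, and close the duality gap via constraint qualification for affine constraints. The algebra, including the $\tfrac{1}{4}$ coefficient and the cancellation $\lambda^i A^i z^i + \lambda^j A^j z^j = \lambda^i A^i(z^i - z^j)$, checks out.
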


\begin{proof}
The proof is provided in Appendix \ref{app:proof-h-dual-qp}.
\end{proof}


Since an optimal solution to \eqref{eq:min-dist-primal} always exists, an optimal solution to \eqref{eq:min-dist-dual} also always exists.
Let $(z^{*i}(t),z^{*j}(t)) \in \mathcal{O}(t)$ and the active set of constraints at $z^{*i}(t)$ be $\text{Act}^i(z^{*i}(t),t)$.
Linear independence constraint qualification (LICQ) is said to be held at $z^{*i}(t)$ if $A^i_{\text{Act}^i(z^{*i}(t),t)}$ is full rank~\cite[Def.~2.1]{borrelli2017predictive}.
By the linear independence assumption on $\mathcal{P}^i(t)$, since the set of the active constraints at any vertex of $\mathcal{P}^i(t)$ are linearly independent, $A^i_{\text{Act}^i(z^{*i}(t),t)}$ is also full-rank for all $z^{*i}(t)$~\cite[Lem.~2.1]{borrelli2017predictive}.
Then, the primal problem \eqref{eq:min-dist-primal} is considered non-degenerate and there exists a unique dual optimal solution for \eqref{eq:min-dist-dual}~\cite[Lem.~2.2]{borrelli2017predictive}.
The dual optimal solution $(\lambda^{*i}(t),\lambda^{*j}(t))$ along with any primal optimal solution $(z^{*i}(t),z^{*j}(t))$ must then satisfy the KKT optimality conditions at $t$:
\begin{align}
2\lambda^{*i}(t)A^i(t) & = z^{*i}(t)^T - z^{*j}(t)^T = s^*(t)^T, \\\nonumber
2\lambda^{*j}(t)A^j(t) & = z^{*j}(t)^T - z^{*i}(t)^T = -s^*(t)^T, \\\nonumber
\lambda^{*i}_k(t) & = 0 \quad \text{for } k \notin \text{Act}^i(z^{*i}(t),t), \\\nonumber
\lambda^{*j}_k(t) & = 0 \quad \text{for } k \notin \text{Act}^j(z^{*j}(t),t).
\end{align}
Then, by LICQ, $A^i_{\text{Act}^i(t)}$ and $A^j_{\text{Act}^j(t)}$ have full rank and the non-zero components of the dual optimal solutions at $t$ can be written explicitly as:
\begin{align}
\lambda^{*i}(t){=}s^*(t)^T {A^{i \; \dagger}_{\text{Act}^i(t)}}, \
\lambda^{*j}(t){=}-s^*(t)^T {A^{j \; \dagger}_{\text{Act}^j(t)}},
\end{align}
where $(\cdot)^{\dagger}$ is the generalized inverse.
By assumption $\text{Aff}^i(\tau)$ and $\text{Aff}^j(\tau)$ have full rank for $\tau \in [t,t+\epsilon)$ for almost all $t \in [0,T]$, and thus $\lambda^{*i}(t)$ and $\lambda^{*j}(t)$ are right-differentiable at almost all $t \in [0,T]$.
Since $\lambda^{*i}(t), \lambda^{*j}(t), A^i(t)$ and $A^j(t)$ are right-differentiable, we can differentiate the constraints of \eqref{eq:min-dist-dual}.
Finally, we can explicitly write a linear program, which is obtained by differentiating the cost and constraints of \eqref{eq:min-dist-dual}, to calculate $\dot{h}(t)$ as:

\begin{lemma} \label{lem:deriv-h-dual-QP}
Let,
\begin{equation}
\label{eq:deriv-h-dual-QP}
\begin{split}
   g(t) = \max_{\{\dot{\lambda}^i,\dot{\lambda}^j\}} & \dot{L}(t,\lambda^{*i}(t),\lambda^{*j}(t),\dot{\lambda}^i,\dot{\lambda}^j) \\
    \text{s.t.} \quad & \dot{\lambda}^iA^i(t) + \lambda^{*i}(t)\dot{A}^i(t) \\
    & \hspace{10pt} + \dot{\lambda}^jA^j(t) + \lambda^{*j}(t)\dot{A}^j(t) = 0, \\
    & \dot{\lambda}^i_k  \geq 0 \quad \text{if } \lambda^{*i}(t)_k = 0, \\
    & \dot{\lambda}^j_k  \geq 0 \quad \text{if } \lambda^{*j}(t)_k = 0.
\end{split}
\end{equation}
where $\dot{L}(t,\lambda^i,\lambda^j,\dot{\lambda}^i,\dot{\lambda}^j)$ represents the time-derivative of Lagrangian function $L(\lambda^i, \lambda^j)$ and is as follows,
\begin{equation} 
\label{eq:def-deriv-L}
\begin{split}
    \dot{L} = & -\frac{1}{2}\lambda^iA^i(t)A^i(t)^T\dot{\lambda}^{i \; T} - \frac{1}{2}\lambda^iA^i(t)\dot{A}^i(t)^T\lambda^{i \; T} \\
& - \dot{\lambda}^ib^i(t) - \lambda^i\dot{b}^i(t) - \dot{\lambda}^jb^j(t) - \lambda^j\dot{b}^j(t).
\end{split}
\end{equation}
Then, for almost all $t \in [0,T]$, $\dot{h}^{ij}(t) = g(t)$.
\end{lemma}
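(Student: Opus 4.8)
The plan is to prove the two inequalities $\dot h^{ij}(t^+)\le g(t)$ and $\dot h^{ij}(t^+)\ge g(t)$ for almost all $t\in[0,T]$, and then use absolute continuity of $h^{ij}(t)$ to replace the right-derivative by the ordinary derivative on the (full-measure) set where the latter exists. I work at a time $t$ in the full-measure set on which the constant-dimension assumption holds, the dual optimal solution $(\lambda^{*i},\lambda^{*j})$ is unique and right-differentiable on some $[t,t+\epsilon)$ (as established just before the statement), and $h^{ij}$ is differentiable. On such a right-neighborhood strong duality (Lemma~\ref{lem:h-dual-qp}) gives $h^{ij}(\tau)=L(\lambda^{*i}(\tau),\lambda^{*j}(\tau))$, where $L$ is the function \eqref{eq:lagrange-dual-func-expl} of the time-varying data $A^i(\tau),A^j(\tau),b^i(\tau),b^j(\tau)$, and a direct computation (matching \eqref{eq:def-deriv-L}) shows that the total time-derivative of $\tau\mapsto L(\lambda^i(\tau),\lambda^j(\tau))$ along any right-differentiable multiplier path equals $\dot L$ evaluated at that path and its right-derivative.

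For the upper bound, differentiate the identity $h^{ij}(\tau)=L(\lambda^{*i}(\tau),\lambda^{*j}(\tau))$ at $\tau=t^+$ by the chain rule to obtain $\dot h^{ij}(t^+)=\dot L(t,\lambda^{*i}(t),\lambda^{*j}(t),\dot\lambda^{*i}(t^+),\dot\lambda^{*j}(t^+))$. It then suffices to check that $(\dot\lambda^{*i}(t^+),\dot\lambda^{*j}(t^+))$ is feasible for \eqref{eq:deriv-h-dual-QP}: its equality constraint is exactly the right-derivative at $t$ of the dual equality constraint $\lambda^{*i}(\tau)A^i(\tau)+\lambda^{*j}(\tau)A^j(\tau)=0$, and the sign constraints hold because $\lambda^{*i}_k(\tau)\ge 0$ with $\lambda^{*i}_k(t)=0$ forces $\dot\lambda^{*i}_k(t^+)\ge 0$ (likewise for $j$). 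Thus $\dot h^{ij}(t^+)$ equals $\dot L$ at a feasible point of \eqref{eq:deriv-h-dual-QP}, so $\dot h^{ij}(t^+)\le g(t)$.

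For the lower bound, let $(\hat{\dot\lambda}^i,\hat{\dot\lambda}^j)$ attain the maximum in \eqref{eq:deriv-h-dual-QP}. The idea is to build, for $\tau\in[t,t+\epsilon)$, a multiplier $(\lambda^i(\tau),\lambda^j(\tau))$ that is \emph{exactly} feasible for the dual \eqref{eq:min-dist-dual} at time $\tau$, equals $(\lambda^{*i}(t),\lambda^{*j}(t))$ at $\tau=t$, and has right-derivative $(\hat{\dot\lambda}^i,\hat{\dot\lambda}^j)$. Start from the linear guess $\mu(\tau)=(\lambda^{*i}(t),\lambda^{*j}(t))+(\tau-t)(\hat{\dot\lambda}^i,\hat{\dot\lambda}^j)$; by right-differentiability of $A^i,A^j$ together with the zeroth-order identity (the dual equality at $t$) and the first-order identity (the LP equality constraint), the residual $\mu^i(\tau)A^i(\tau)+\mu^j(\tau)A^j(\tau)$ is $o(\tau-t)$, and it manifestly lies in the column space which, by the constant-dimension (hence constant-rank) hypothesis and LICQ, admits a uniformly bounded right-inverse on $[t,t+\epsilon)$; subtracting from $\mu(\tau)$ the minimum-norm preimage of this residual — which can be taken supported away from the ``borderline'' indices $k$ with $\lambda^{*}_k(t)=0$ and $\hat{\dot\lambda}_k=0$ since $\mu_k(\tau)=0$ there — produces a path with the stated properties, and nonnegativity is preserved for $\tau-t$ small. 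Weak duality then gives $h^{ij}(\tau)\ge L(\lambda^i(\tau),\lambda^j(\tau))$ with equality at $\tau=t$, so dividing by $\tau-t$ and letting $\tau\to t^+$ yields $\dot h^{ij}(t^+)\ge \dot L(t,\lambda^{*i}(t),\lambda^{*j}(t),\hat{\dot\lambda}^i,\hat{\dot\lambda}^j)=g(t)$.

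Combining the two bounds gives $\dot h^{ij}(t^+)=g(t)$; since $h^{ij}$ is absolutely continuous it is differentiable with $\dot h^{ij}(t)=\dot h^{ij}(t^+)$ at almost every $t$, which is the claim. I expect the main obstacle to be the lower-bound step: exhibiting a multiplier path that is feasible to all orders (not merely to first order) while respecting the nonnegativity constraints, in particular handling the borderline coordinates where both $\lambda^{*}_k(t)=0$ and the optimal LP direction has $\hat{\dot\lambda}_k=0$, at which an uncontrolled correction term could push the path outside the feasible cone. The constant-rank and LICQ hypotheses are precisely what keep the required correction $o(\tau-t)$ and allow it to be chosen off those coordinates.
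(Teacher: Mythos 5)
Your proof follows essentially the same route as the paper: the upper bound by noting that the right-derivative of the (unique, right-differentiable) dual optimal multipliers is feasible for the LP \eqref{eq:deriv-h-dual-QP}, and the lower bound by integrating a feasible LP direction into a dual-feasible multiplier path on $[t,t+\epsilon)$ and invoking weak duality. In fact you supply more detail than the paper on the lower-bound path construction (the paper merely asserts one can ``integrate the constraints''), and your flagged concern about the borderline coordinates is a real subtlety that the paper's own proof glosses over.
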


\begin{proof}
Since $(\lambda^{*i}(t),\lambda^{*j}(t))$ is the optimal solution to \eqref{eq:min-dist-dual}, its derivative $(\dot{\lambda}^{*i}(t),\dot{\lambda}^{*j}(t))$ is a feasible solution to \eqref{eq:deriv-h-dual-QP} for almost all $t \in [0,T]$.
So, $\dot{h}^{ij}(t) = \frac{d}{dt}L(\lambda^{*i}(t),\lambda^{*j}(t)) \leq g(t)$ for almost all $t \in [0,T]$.
Let $(\dot{\lambda}^i,\dot{\lambda}^j)$ be a feasible solution to \eqref{eq:deriv-h-dual-QP}.
We can integrate the constraints of \eqref{eq:deriv-h-dual-QP} to find $(\bar{\lambda}^i(\tau),\bar{\lambda}^j(\tau)), \tau \in [t,t+\epsilon)$ which satisfy:
\begin{align}
(\bar{\lambda}^i(t),\bar{\lambda}^j(t)) = (\lambda^{*i}(t),\lambda^{*j}(t)) \\\nonumber
(\bar{\lambda}^i(t),\bar{\lambda}^j(t)) \text{ is dual feasible for \eqref{eq:min-dist-dual}}
\end{align}
So, $(\bar{\lambda}^i(\tau),\bar{\lambda}^j(\tau))$ are dual feasible and have cost less than $h^{ij}(\tau)$, i.e. $L(\bar{\lambda}(\tau),\bar{\lambda}(\tau)) {\leq} h^{ij}(\tau) \; \forall \tau{\in}[t,t{+}\epsilon)$ and $L(\bar{\lambda}(t),\bar{\lambda}(t)) {=} h^{ij}(t)$.
Differentiating the cost yields $\dot{h}^{ij}(t) \geq \dot{L}(t,\lambda^{*i}(t),\lambda^{*j}(t),\dot{\lambda}^i,\dot{\lambda}^j)$ for all feasible $(\dot{\lambda}^i,\dot{\lambda}^j)$, and thus $\dot{h}^{ij}(t) {\geq} g(t)$.
So, $g(t) {=} \dot{h}^{ij}(t)$ for almost all $t {\in} [0,T]$.
\end{proof}



Based on the linear program in \eqref{eq:deriv-h-dual-QP},  we can conservatively implement the NCBF constraint by enforcing, for some $(\dot{\lambda}^i,\dot{\lambda}^j)$ feasible for \eqref{eq:deriv-h-dual-QP},
\begin{equation} \label{eq:conserv-cbf-cons}
\dot{L}(t,\lambda^{*i}(t),\lambda^{*j}(t),\dot{\lambda}^i,\dot{\lambda}^j) \geq -\alpha(h(t)).
\end{equation}
Lem. \ref{lem:deriv-h-dual-QP} then guarantees that
\begin{equation} \label{eq:strong-nbf-proof}
\dot{h}(t) \geq \dot{L}(t,\lambda^{*i}(t),\lambda^{*j}(t),\dot{\lambda}^i,\dot{\lambda}^j) \geq -\alpha(h(t))
\end{equation}
which is the required NCBF constraint.

\begin{remark} \label{rem:duality-reason}
Due to the direction of inequality required for the NCBF constraint, $\dot{h}(t)$ needs to be expressed as a maximization problem.
This is the primary motivation for considering the dual problem, since writing $\dot{h}(t)$ using the primal problem results in a minimization problem \eqref{eq:h-derivative-primal}.
\end{remark}

We use \eqref{eq:conserv-cbf-cons} to motivate a feedback control law to guarantee safety of the system.
The input $u$ implicitly affects $\dot{L}$ via the derivatives of the boundary matrices $A^i, A^j, b^i, b^j$.
Note that $\dot{L}$ is affine in $\dot{\lambda}^i$, $\dot{\lambda}^j$, and $u$.
So, \eqref{eq:conserv-cbf-cons} is a linear constraint in $\dot{\lambda}^i$, $\dot{\lambda}^j$, and $u$.
So, $ \forall \; x \in \mathcal{S}$, \eqref{eq:min-dist-dual} is used to compute $h(x)$, $\lambda^{*i}(x)$, and $\lambda^{*j}(x)$ and the optimal solution of following quadratic program is used as the feedback control:

{\small
\begin{subequations}
\label{eq:dual-QP}
\begin{align}
   u^*(x) = & \argmin_{\{u,\dot{\lambda}^i,\dot{\lambda}^j\}} \lVert u - u^{nom}(x) \rVert^2_Q \\
    \text{s.t.} \quad &  \dot{L}(t,\lambda^{*i}(x),\lambda^{*j}(x),\dot{\lambda}^i,\dot{\lambda}^j,u)\geq-\alpha(h(x)-\epsilon_1^2) \label{subeq:dual-QP-ncbf} \\ 
    & \dot{\lambda}^iA^i(x) {+} \lambda^{*i}(x)(\mathcal{L}_{f^i}A^i(x) {+} \mathcal{L}_{g^i}A^i(x)u) \label{subeq:dual-QP-plane} \\\nonumber
    & \quad {+} \lambda^{*j}(x)(\mathcal{L}_{f^j}A^j(x) {+} \mathcal{L}_{g^j}A^j(x)u)  = {-}\dot{\lambda}^jA^j(x) \\
    & \dot{\lambda}^i_k  \geq 0 \ \text{if} \ \lambda^{*i}(x)_k{<}\epsilon_2, \ \dot{\lambda}^j_k  \geq 0 \ \text{if} \ \lambda^{*j}(x)_k{<}\epsilon_2, \label{subeq:dual-QP-non-neg}\\
    & |\dot{\lambda}^{i}| \leq M, |\dot{\lambda}^{j}| \leq M, \label{subeq:dual-QP-bounds} \\
    & u \in \mathcal{U},
\end{align}
\end{subequations}
}
where $u^{nom}(x)$ is a non-safe nominal feedback control law, $\mathcal{L}_{(\star)}(\cdot)$ represents the Lie derivative of $(\cdot)$ along $(\star)$, $Q \succ 0$ is the cost matrix, $M$ is a large number, and $\epsilon_1$ and $\epsilon_2 > 0$ are small constants.
$u^{nom}(x)$ can be obtained by a control Lyapunov function or by any tracking controller.
Note that \eqref{eq:dual-QP} is a feedback law which does not assume existence of Filippov solutions, since it is only a function of $x$ and not $t$.

\begin{remark} \label{rem:eps-1}
Since $h(x)$ is quadratic in nature, its gradient (if it exists) can be zero at $\partial \mathcal{S}$, which can affect forward invariance of $\mathcal{S}$~\cite[Rem.~5]{ames2019control}.
For example, in 1D, let $\dot{x} = u$ and $h(x) = x^2$.
Then at $x = 0$, the NCBF constraint reduces to $0\cdot u \geq 0$, which is true $\forall \; u \in \mathcal{U}$.
This would imply that the system remains safe irrespective of the input, which is not true.
This problem can be solved by setting $h(x) = x^2 - \epsilon_1^2$, which would result in non-zero gradient at $x = \epsilon_1$.
So, we consider an $\epsilon_1 > 0$ and the $\epsilon_1^2$-level set of $h$ as
\begin{equation}
\Omega_{\epsilon_1^2} = \{x: h(x) = \epsilon_1^2\}.
\end{equation}
We can then redefine $h(x)$ as $(h(x) - \epsilon_1^2)$.
If the new $h$ satisfies \eqref{eq:nbf-constraint}, then the $\epsilon_1^2$-superlevel set is safe.
\end{remark}

\begin{remark}
The constant $\epsilon_2$ is used to ensure upper semi-continuity of the feasible set of control inputs for \eqref{eq:dual-QP}.
This is similar to the almost-active gradient method used to prove safety in \cite[Thm.~3]{glotfelter2018boolean}.
\end{remark}

We now prove that \eqref{eq:dual-QP} results in system safety.
\begin{theorem} \label{theorem:safety}
Let $x(0) \in \mathcal{S}$ and \eqref{eq:dual-QP} be feasible $\forall \; x \in \mathcal{S}$ for some locally Lipschitz class-$\mathcal{K}$ function $\alpha$. Then, using the feedback control law \eqref{eq:dual-QP}, the system remains safe irrespective of the cost function.
\end{theorem}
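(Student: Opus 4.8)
The plan is to verify the hypotheses of Lemma~\ref{lem:nbf-def} for the closed-loop trajectory, applied to the shifted distance $\tilde h(x) := h(x) - \epsilon_1^2$ of Remark~\ref{rem:eps-1}; thus ``$x(0)\in\mathcal{S}$'' is understood as $\tilde h(x(0))>0$, and the conclusion $\tilde h>0$ yields $h>\epsilon_1^2>0$, hence $x(t)\in\mathcal{S}$ by \eqref{eq:def-safe-set}. First I would check that the feedback $u^*$ given by \eqref{eq:dual-QP} is well defined and measurable on $\mathcal{S}$: feasibility is assumed, the feasible set of \eqref{eq:dual-QP} is convex and compact (it is the intersection of the closed half-spaces \eqref{subeq:dual-QP-ncbf} and \eqref{subeq:dual-QP-non-neg}, the affine set \eqref{subeq:dual-QP-plane}, the box \eqref{subeq:dual-QP-bounds}, and $u\in\mathcal{U}$), and the cost is strictly convex in $u$ since $Q\succ0$, so the minimizing $u^*(x)$ is unique; since the data $h(x)$, $\lambda^{*i}(x)$, $\lambda^{*j}(x)$, $A^{i},A^{j},b^{i},b^{j}$ and their Lie derivatives are measurable in $x$ and the threshold sets $\{k:\lambda^{*i}(x)_k<\epsilon_2\}$ are measurable, a measurable-selection argument makes $x\mapsto u^*(x)$ measurable. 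By the discussion following \eqref{eq:filippov-operator-def}, a Filippov solution $x(\cdot)$ on some $[0,T]$ then exists for any $x(0)$.

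The core step is to show $\dot{\tilde h}(t)=\dot h(t)\geq-\alpha(\tilde h(t))$ for almost all $t\in[0,T]$ along this solution. Fix $t$ in the full-measure set on which $\dot x(t)$ exists and lies in $F[f+gu^*](x(t))$ and the constant-dimension assumption preceding Lemma~\ref{lem:differentiable-separating-vector} holds, so that $h$ is right-differentiable at $t$ and $\dot h(t)=g(t)$, the optimal value of the linear program \eqref{eq:deriv-h-dual-QP}, by Lemma~\ref{lem:deriv-h-dual-QP}. By \eqref{eq:filippov-operator-def} and continuity of $f,g$, the realized velocity is $\dot x(t)=f(x(t))+g(x(t))\hat u(t)$ with $\hat u(t)$ in the closed convex hull of limit points $\lim_m u^*(x_{(m)})$ over sequences $x_{(m)}\to x(t)$, so $\dot A^i(t)=\mathcal{L}_{f^i}A^i(x(t))+\mathcal{L}_{g^i}A^i(x(t))\hat u(t)$ and similarly for $A^j,b^i,b^j$. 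For each approximating sequence, feasibility of \eqref{eq:dual-QP} at $x_{(m)}$ yields multipliers $(\dot\lambda^i_{(m)},\dot\lambda^j_{(m)})$, bounded via \eqref{subeq:dual-QP-bounds}, satisfying \eqref{subeq:dual-QP-ncbf}--\eqref{subeq:dual-QP-non-neg} at $x_{(m)}$; after passing to a subsequence, $u^*(x_{(m)})\to u_\infty\in\mathcal{U}$ and $(\dot\lambda^i_{(m)},\dot\lambda^j_{(m)})\to(\dot\lambda^i_\infty,\dot\lambda^j_\infty)$. Using continuity of $h$, $A^i,b^i$ and their Lie derivatives, together with the key fact that for $x_{(m)}$ near $x(t)$ the active indices of the nearby dual problem lie in the ``almost-active'' set $\{k:\lambda^{*i}(x(t))_k<\epsilon_2\}$ (this is the role of $\epsilon_2$: it makes the multiplier-feasible set upper semi-continuous, as in the almost-active gradient method of \cite[Thm.~3]{glotfelter2018boolean}), the limit $(u_\infty,\dot\lambda^i_\infty,\dot\lambda^j_\infty)$ satisfies \eqref{subeq:dual-QP-plane}--\eqref{subeq:dual-QP-non-neg} \emph{at $x(t)$} and $\dot L(t,\lambda^{*i}(x(t)),\lambda^{*j}(x(t)),\dot\lambda^i_\infty,\dot\lambda^j_\infty,u_\infty)\geq-\alpha(h(x(t))-\epsilon_1^2)$. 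Since $\dot L$ and the constraints \eqref{subeq:dual-QP-plane}--\eqref{subeq:dual-QP-non-neg} are affine in $(u,\dot\lambda^i,\dot\lambda^j)$, averaging over the finitely many limit points that generate $\hat u(t)$ produces one pair $(\dot{\bar\lambda}^i,\dot{\bar\lambda}^j)$ feasible for the linear program \eqref{eq:deriv-h-dual-QP} with the realized $\dot A^i(t),\dot A^j(t),\dot b^i(t),\dot b^j(t)$ and with $\dot L(t,\lambda^{*i}(t),\lambda^{*j}(t),\dot{\bar\lambda}^i,\dot{\bar\lambda}^j)\geq-\alpha(h(t)-\epsilon_1^2)$. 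Therefore $\dot{\tilde h}(t)=\dot h(t)=g(t)\geq\dot L(t,\lambda^{*i}(t),\lambda^{*j}(t),\dot{\bar\lambda}^i,\dot{\bar\lambda}^j)\geq-\alpha(h(t)-\epsilon_1^2)=-\alpha(\tilde h(t))$, which is the NCBF constraint \eqref{eq:nbf-constraint} for $\tilde h$.

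Finally, with $\dot{\tilde h}(t)\geq-\alpha(\tilde h(t))$ for almost all $t\in[0,T]$ and $\tilde h(x(0))>0$, Lemma~\ref{lem:nbf-def} gives $\tilde h(x(t))>0$, i.e.\ $h(x(t))>\epsilon_1^2>0$, for all $t\in[0,T]$; by \eqref{eq:def-safe-set} the trajectory stays in $\mathcal{S}$, and since the argument used the quadratic objective only to single out a unique $u^*$, safety holds irrespective of the cost. I expect the main obstacle to be the middle paragraph: reconciling the Filippov convexification of the \emph{discontinuous} feedback $u^*$ with the switching (active-set) structure of $\lambda^{*i},\lambda^{*j}$, i.e.\ showing that the NCBF inequality enforced pointwise at each state by \eqref{eq:dual-QP} survives both the limit $x_{(m)}\to x(t)$ and the convex hull in \eqref{eq:filippov-operator-def}. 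The devices that make this work are the $\epsilon_2$-relaxation in \eqref{subeq:dual-QP-non-neg} (upper semi-continuity of the set of feasible multipliers, hence lower semi-continuity of the attainable $\dot L$) and the $\epsilon_1$-shift of Remark~\ref{rem:eps-1} (non-vanishing gradient on the boundary of the redefined safe set); carefully justifying the claimed containment of nearby active sets in the almost-active set is the step needing the most care.
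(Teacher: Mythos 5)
Your proof follows essentially the same route as the paper's: the paper likewise reduces the theorem to showing that the feasible-input map $\mathcal{F}_u$ of \eqref{eq:dual-QP} is pointwise convex, compact, and upper semi-continuous, hence invariant under the Filippov operator ($F[\mathcal{F}_u](x)=\mathcal{F}_u(x)$), and then concludes via Lem.~\ref{lem:deriv-h-dual-QP}, \eqref{eq:strong-nbf-proof}, and Lem.~\ref{lem:nbf-def} exactly as you do; your pointwise-in-$t$ unpacking of the convex hull together with the subsequence/limit argument for the bounded multipliers is the same content in different packaging. One correction to the step you flagged as delicate: the containment you need runs the other way. By continuity of $\lambda^{*i},\lambda^{*j}$ and strictness of the inequality in \eqref{subeq:dual-QP-non-neg}, for $x_{(m)}$ close enough to $x(t)$ one has $\{k:\lambda^{*i}(x(t))_k<\epsilon_2\}\subseteq\{k:\lambda^{*i}(x_{(m)})_k<\epsilon_2\}$, i.e.\ every non-negativity constraint required at the limit point was eventually already enforced along the sequence, so the constraints satisfied by $(\dot{\lambda}^i_{(m)},\dot{\lambda}^j_{(m)})$ are (for large $m$) at least as strong as those needed at $x(t)$, and their limit is feasible there; the reverse containment you wrote can fail (e.g.\ when some $\lambda^{*i}(x(t))_k=\epsilon_2$) and is not what the argument needs. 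This is precisely the paper's ``limit property of the index set'' in Appendix B, combined with its observation that shrinking the index set can only enlarge the feasible multiplier set and increase the optimal value of the auxiliary linear program.
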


\begin{proof}
The proof is provided in Appendix \ref{app:safety-proof}.
\end{proof}

\begin{remark} \label{rem:multiple-robot-obstacle}
This formulation can be extended to more than 2 robots by introducing a new pair of dual variables for each pair of robots, and the corresponding constraints in \eqref{eq:dual-QP}.
Note that the dual variable for robot $i$ corresponding to robot $j$ is different from that for robot $k$.
Thus, the analysis in Sec. \ref{sec:dual-formulation} remains valid.
Additionally, static or dynamic obstacles can be represented as uncontrolled robots and non-convex shaped robots can be represented through unions of different polytopes with each using the same states and inputs.
\end{remark}

\section{Numerical Examples}
\label{sec:results}
In this section, we consider the problem of moving an L-shaped sofa through an L-shaped corridor.
The approach presented in Sec. \ref{sec:dual-formulation} is applied to solve the problem in real-time.


\subsection{Simulation Setup}
The sofa is the controlled object, with the side length of the L-shape as $\SI{1}{m}$ and the width as $\SI{0.1}{m}$, as illustrated in Fig. \ref{fig:snapshots-sofa-problem}. To implement our controller, we consider the sofa as the union of two $\SI{1}{m}\times\SI{0.1}{m}$ arms perpendicular to each other. The width of the corridor is $\SI{1}{m}$.

The states of the sofa are $x=(z_1,z_2,\theta) \in \R^2\times \R$, where $(z_1,z_2)$ is the position of the vertex at the intersection of the two arms, and $\theta$ is the angle of rotation. The inputs to the sofa are $(v,\omega)$, where $v$ is the speed of the sofa and $\omega$ the angular velocity. The velocity of the sofa is assumed to be along a $\frac{\pi}{4}$ angle to the arm. The nonlinear control affine dynamics of the sofa is then,
\begin{equation} \label{eq:ex-sofa-dyn}
\dot{z}_1 = v \cos(\theta + \frac{\pi}{4}), \ \dot{z}_2 = v \sin(\theta + \frac{\pi}{4}), \ \dot{\theta} = \omega.
\end{equation}
Further, we impose input bounds as $|v| \leq \SI{0.3}{m/s}$ and $|\omega| \leq \SI{0.2}{rad/s}$.
The corridor is represented as the union of three rectangular walls, shown in Fig. \ref{fig:snapshots-sofa-problem}. The polytopes corresponding to these walls are defined as follows:
\begin{equation}
\mathcal{W}^i \coloneqq \{z \in \R^2: A^{W^i}z \leq b^{W^i}\}, \; i \in [3]
\end{equation}
We note the general problem description in Sec. \ref{sec:background} allows for such static obstacles by eliminating the dependence of $A^{W^i}$ and $b^{W^i}$ on the state, as mentioned in Rem. \ref{rem:multiple-robot-obstacle}.

Similarly, the sofa is represented as:
\begin{equation}
\mathcal{P}^j(x) \coloneqq \{z \in \R^2: A^{S^j}(x)z \leq b^{S^j}(x)\}, \; j \in [2],
\end{equation}
where $\mathcal{P}^j(x)$ is denoted as Arm $j$.
Arm 1 is the blue-colored polytope in Fig. \ref{fig:snapshots-sofa-problem}, whereas Arm 2 is the green-colored one.
Again, our description allows for this by choosing the same states and inputs for the two polytopes.

The NCBF is chosen as the square of the minimum distance as in \eqref{eq:min-dist-primal}.
Notice that the QP-based program \eqref{eq:dual-QP} will always have a solution, as $v=0, \omega=0$ is always a feasible solution.
Since we have static obstacles and the controlled object consists of two polytopes, we only enforce NCBF constraints between $\mathcal{W}^i$ and $\mathcal{P}^j$ for $i \in [3], j\in [2]$.

A control Lyapunov function $V_{clf}$ \cite{ames2019control} is introduced in the final QP formulation in place of the nominal controller with the form as follows:
\begin{equation} \label{eq:clf-equation}
V_{clf}(x) = (z_1-z^d_1)^2 + (z_2-z^d_2)^2 + k(\theta - \theta^d)^2,
\end{equation}
where $\theta_d = -\frac{\pi}{4}$.
The desired position $(z^d_1,z^d_2)$ is chosen to be at the end of the corridor, and the initial orientation is $\frac{\pi}{4}$.
The CLF constraint $\dot{V}_{clf} \geq - \alpha_1 V_{clf} - s$ is then enforced, where $s \geq 0$ is a slack variable.
The slack variable ensures that the feasibility of the QP-based program is not affected by the CLF constraint.
The margin $\epsilon_1$ as defined in Rem. \ref{rem:eps-1} is chosen as $\SI{1.5}{cm}$ and $\epsilon_2$ is chosen as $10^{-5}$.


\subsection{Results}

The simulations are performed on a Virtual Machine with $4$ cores of a $\SI{2.20}{Ghz}$ Intel Core i7 processor, running IPOPT \cite{wachter2006implementation} on MATLAB, and the visualization is generated by MPT3~\cite{MPT3}.
The snapshots of the sofa trajectory is shown in Fig. \ref{fig:snapshots-sofa-problem} and also illustrated in the multimedia attachment.

\subsubsection{Enforcement of NCBF constraint}
The system remains safe throughout the simulation since the NCBF $h^{ij}$ is greater than $\epsilon_1^2$ for all possible robot-obstacle interactions as depicted in Fig. \ref{fig:min-d}.
In Fig. \ref{fig:cbf-cons}, the values of $\dot{h}(t)$, and LHS and RHS terms in the NCBF constraints from \eqref{eq:dual-QP} are shown. We can see that by Lem. \ref{lem:deriv-h-dual-QP}, the LHS term in the NCBF constraint is always a lower bound to $\dot{h}(t)$. This verifies the safety property of our duality-based approach.

\begin{figure}
    \centering
    \includegraphics[width=0.99\linewidth]{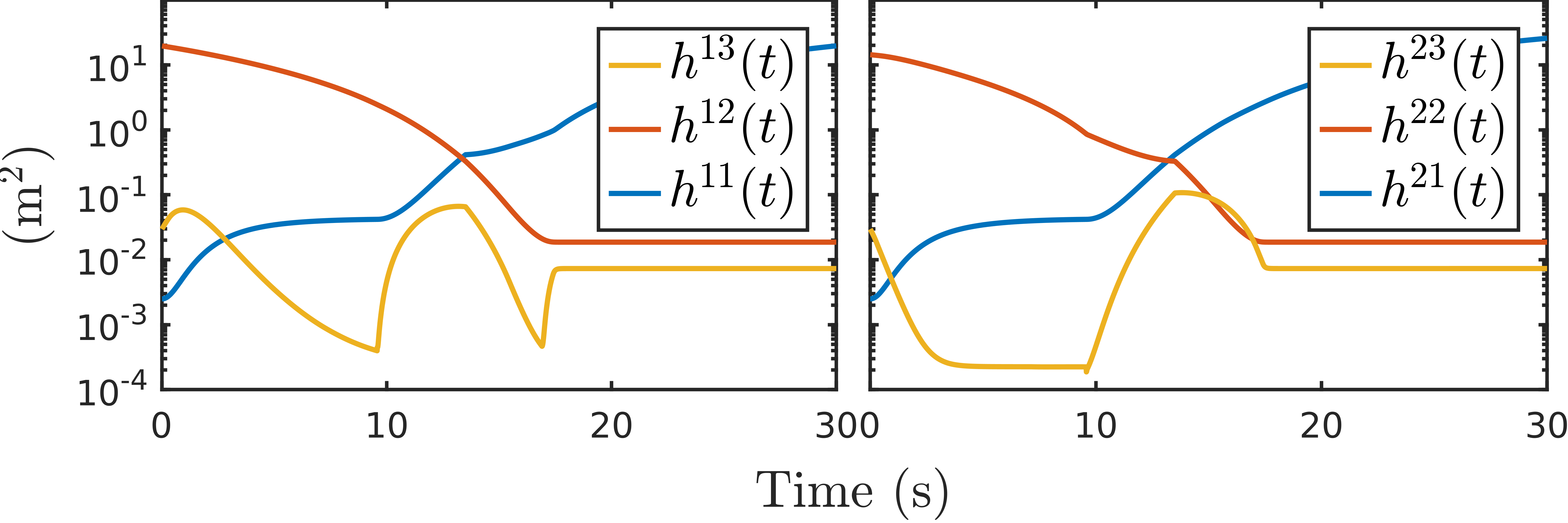}
    \caption{Square of minimum distance (NCBF) between the arms of the sofa and the walls, where Wall $W^1$ (left figure) is the left wall, Wall $W^2$ (right figure) the upper one, and Wall $W^3$ the inner wall of the corridor. Since the minimum distance is always greater than zero, the L-shaped sofa never collides with the obstacles over its trajectory. Both plots have log-scale on the y-axis.}
    \label{fig:min-d}
    \vspace{-10pt}
\end{figure}

\subsubsection{Computation time}

\begin{table}[h]
    \caption{Statistical analysis of computation time (ms) per iteration}
    \label{table:comp-time}
    \centering
    \begin{tabular}{|p{0.33 \linewidth}|>{\centering\arraybackslash}p{0.17\linewidth}|>{\centering\arraybackslash}p{0.08\linewidth}|>{\centering\arraybackslash}p{0.08\linewidth}|>{\centering\arraybackslash}p{0.08\linewidth}|}
    \hline
    Timing ($\SI{}{ms}$) & mean $\pm$ std  & p50 & p99 & max \\\hline
    Distance QPs \eqref{eq:min-dist-dual} & 1.07 $\pm$ 0.29 & 0.99 & 1.94 & 20.1 \\\hline
    Polytope-NCBF-QP \eqref{eq:dual-QP} & 14.5 $\pm$ 1.55 & 14.2 & 19.3 & 37.6 \\\hline
    Total ($2 \times 3 + 1 = 7$ QPs) & 21.1 $\pm$ 1.69 & 20.8 & 26.9 & 41.6 \\\hline
    \end{tabular}
\end{table}

From Table \ref{table:comp-time}, we can see that there are large outliers in computation time per iteration, but they occur in less than $0.1\%$ iterations.
Nevertheless, from Table \ref{table:comp-time}, we can apply our controller at 50Hz.
The fast computation time allows us to directly implement the control inputs from \eqref{eq:dual-QP} on the robot in real-time. 


The optimization problem for the sofa problem has 51 variables and at most 72 constraints.
In general consider $N$ controlled robots in a $d$-dimensional space with $f$ facets and $m$ control inputs.
Then, the polytope-NCBF-QP \eqref{eq:dual-QP} has $(1{+}d)\frac{N(N-1)}{2}$ constraints along with at most $2f\frac{N(N-1)}{2}$ non-negativity constraints and $N(m{+}2f)$ variables, whereas a CBF-QP formulation using spherical over-approximation would have $\frac{N(N-1)}{2}$ constraints and $Nm$ variables.

\subsubsection{Continuity of \texorpdfstring{$\dot{h}(t)$}{dh/dt(t)} and \texorpdfstring{$\lambda^{*}(t)$}{lambda(t)}}
Fig. \ref{fig:cbf-cons} also shows that both $\dot{h}(t)$ and the lower bound of $\dot{h}(t)$ can have discontinuities.
For the case of the moving sofa problem, a discontinuity in $\dot{h}(t)$ can arise when the sofa is rotating and the point of minimum distance on the sofa with any wall jumps from one vertex to another, since the end points of the sofa arm need not have the same velocity when rotating.
The dual optimal variables $\lambda^{*i}(t)$ and $\lambda^{*j}(t)$ are always continuous and right-differentiable as shown in Sec. \ref{sec:dual-formulation}.
The dual variables are plotted in Fig. \ref{fig:dual-soln}, which demonstrate this property.

\subsubsection{Deadlocks}
For various initial conditions, the sofa can get stuck in a deadlock in the corridor.
This can happen when the arms of the sofa are so large that it cannot turn at the corner.
It can also happen when the two arms of the sofa get too close to the wall and the sofa cannot turn because it would cause one of its arms to penetrate the wall.
Our controller still ensures safety in this case.
A high-level planner could help by generating a deadlock free trajectory at low frequency that then serves as input to our control law.


\begin{figure}
    \centering
    \includegraphics[width=0.99\linewidth]{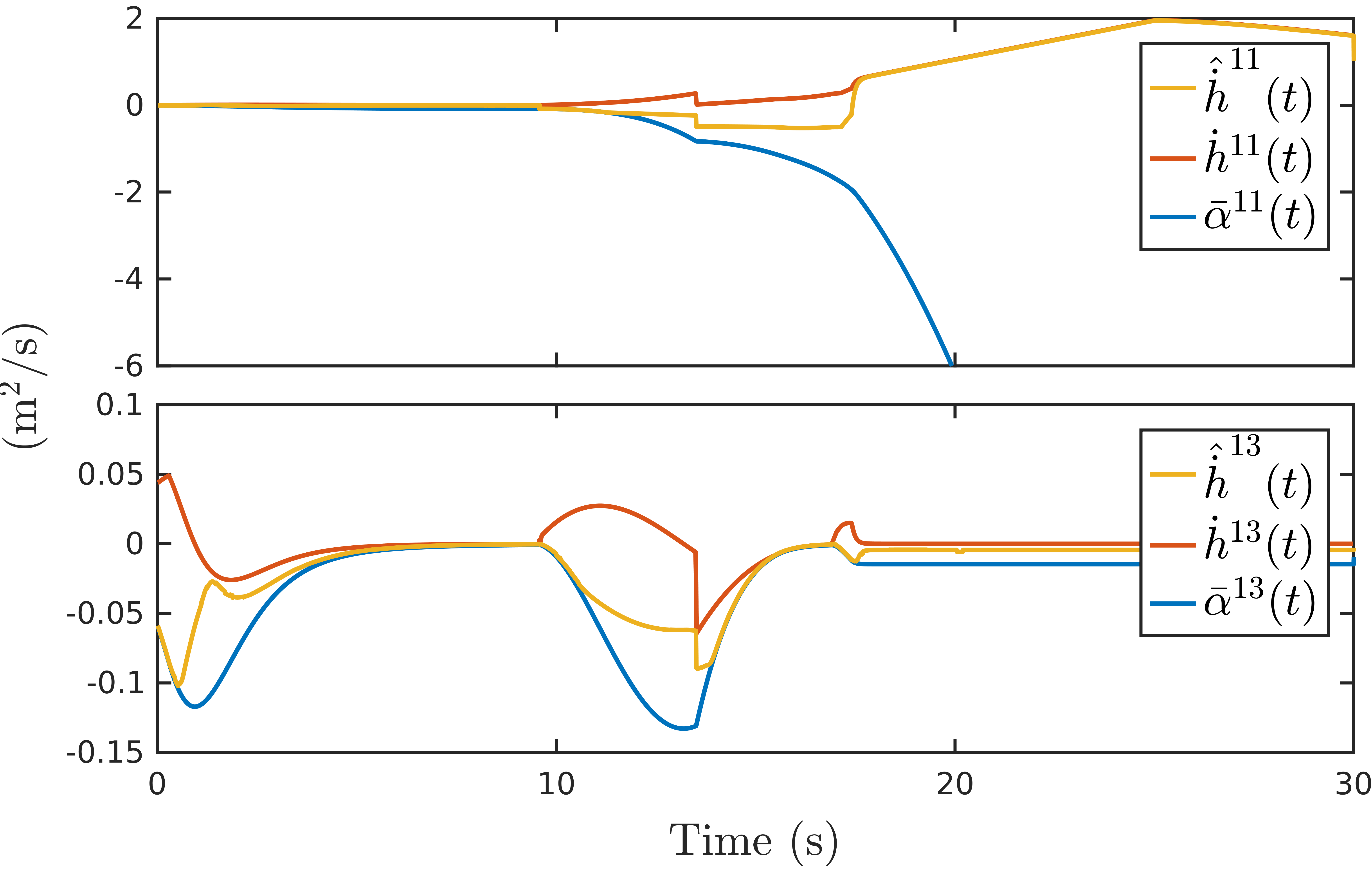}
    \caption{NCBF constraint enforcement between Arm 1 and Wall 1 (top figure) and Arm 1 and Wall 3 (bottom figure).
    Illustration of safety performance of the system. The red lines are $\dot{h}(t)$, the yellow lines are the lower bounds of $\dot{h}(t)$, and blue line is the RHS of the NCBF constraint. $\bar{\alpha}^{1j}(t) = -\alpha(h^{1j}(t)-\epsilon_1^2)$, where $j$ is the wall.
    The system safety is guaranteed with the red line being always above the blue line.}
    \label{fig:cbf-cons}
    \vspace{-10pt}
\end{figure}

\subsection{Discussions}

\subsubsection{Nonlinearity of the system dynamics}
The duality-based formulation \eqref{eq:dual-QP} is a convex quadratic program even when the system dynamics is nonlinear, as long as it is control affine.
This allows us to achieve dynamically-feasible obstacle avoidance with QPs for polytopes.

\subsubsection{Optimality of solution vs computation time}
As noted in Sec. \ref{sec:dual-formulation}, the cost function of the QP-based program \eqref{eq:dual-QP} does not affect the safety of the system.
If the optimization solver does not converge to the optimal solution, the current solution can be used if it is feasible.
This can be useful in real-time implementations where both control frequency and safety matter.
A feasible solution to \eqref{eq:dual-QP} can be directly found in some cases, such as when the NCBF is constructed using a safe backup controller \cite{squires2018constructive}.

\subsubsection{Trade-off between computation speed and tight maneuvering}
A polytope with $f^i$ faces would require $f^i$ dual variables in the dual formulation \eqref{eq:dual-QP} and additional constraints.
Using a hyper-sphere as an over-approximation to the polytope would require fewer dual variables, as in \cite{xiao2019control}.
However, such an approximation can be too conservative and completely ignore the rotational geometry of the polytope.
Using the full polytope structure can prove beneficial in dense environments, such as the sofa problem, where a spherical approximation cannot work.
So, there is a trade-off between computation speed and maneuverability.
In practice, a hybrid approach should be used: a hyper-sphere approximation when two obstacles are far away, and the polytope structure when closer, which could find a good trade-off between computation speed and maneuverability for tight obstacle avoidance in dense environments.


\subsubsection{Robustness with respect to the safe set}
Since we use the minimum distance between two polytopes as the NCBF and not the signed distance, the minimum distance is uniformly zero when two polytopes intersect.
So, the proposed controller is not robust in the sense that if the state leaves the safe set, it will not converge back to it.

\section{Conclusion}
\label{sec:conclusion}
In this paper, we have presented a general framework for obstacle avoidance between polytopes using the dual program of a minimum distance QP problem.
We have shown that the control input using our method can be computed using a QP for systems with control affine dynamics, enabling real-time implementation.
We have numerically verified the safety performance of our controller for the problem of moving an L-shaped sofa through a tight corridor.
We also have explored properties such as robustness and deadlock avoidance.
Further work in this topic would involve extending our method to other convex shapes, and deriving safety guarantees for more general classes of systems.

\begin{figure}
    \centering
    \includegraphics[width=0.99\linewidth]{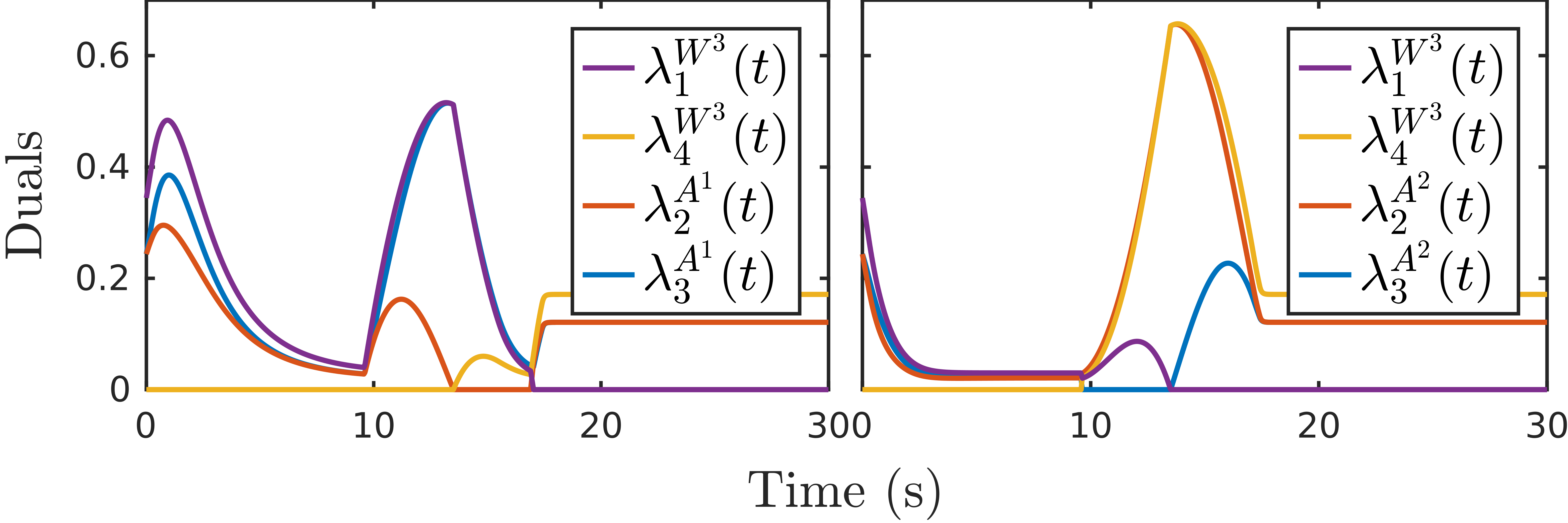}
    \caption{Dual optimal solutions for Arm 1 (left figure) and Arm 2 (right figure) of the robot corresponding to the obstacle wall 3. The dual optimal solutions are unique, continuous, and right-differentiable as shown in Sec. \ref{sec:dual-formulation}. Note that pair of dual variables for Arm 1-Wall 3 are different from that of Arm 2-Wall 3, i.e. the vector $\lambda^{W^3}(t)$ is different for Arms 1 and 2. The subscripts represent the components of the vectors, and only the non-zero components have been plotted.}
    \label{fig:dual-soln}
    \vspace{-10pt}
\end{figure}

\balance

\section*{Acknowledgement}
\label{sec:acknowledgement}
The authors gratefully acknowledge Somayeh Sojoudi at the University of California, Berkeley for her valuable comments on properties of parametric optimization problems.

\bibliographystyle{IEEEtran}
\bibliography{bio/bibliography}
\appendix

\subsection{Proof of Lem. \ref{lem:h-dual-qp}}
\label{app:proof-h-dual-qp}

The Lagrangian function for \eqref{eq:min-dist-primal} is~\cite[Chap.~5]{boyd2004convex}:
\begin{align} \label{eq:lagrange-func}
\Lambda(z^i, z^j, \lambda^i, \lambda^j) = & \lambda^j(A^j(x^j)z^j - b^j(x^j)) + \\\nonumber
& \lambda^i(A^i(x^i)z^i -b^i(x^i)) + \lVert z^i - z^j \rVert^2,
\end{align}
where $\lambda^i \in \R^{1\times r^i}, \lambda^j \in \R^{1\times r^j}$ are the dual variables.
The Lagrangian dual function can be written as:
\begin{equation} \label{eq:lagrange-dual-func}
L(\lambda^i, \lambda^j) = \inf_{z^i,z^j \in \R^l} \left(\Lambda(z^i,z^j,\lambda^i,\lambda^j)\right),
\end{equation}
and together with the Weak Duality Theorem~\cite[Chap.~5]{boyd2004convex}, we have,
\begin{equation} \label{eq:WDT}
L(\lambda^i, \lambda^j) \leq h(x), \;\; \forall \; \lambda^i, \lambda^j \geq 0.
\end{equation}

As the constraints in \eqref{eq:min-dist-primal} are affine in $z^i$ and $z^j$ at any given time, constraint qualification holds and the Strong Duality Theorem can be applied, resulting in
\begin{equation} \label{eq:SDT}
\max_{\lambda^i, \lambda^j \geq 0} L(\lambda^i,\lambda^j) = h(x).
\end{equation}
$L(\lambda^i,\lambda^j)$ can be explicitly computed. Let the minimizer for \eqref{eq:lagrange-dual-func} be $z^i, z^j$. Then,
\begin{align}
\label{eq:minimizer_of_L}
\dfrac{\partial \Lambda}{\partial z^i} = 0 & \Rightarrow 2(z^i-z^j)^T + \lambda^iA^i(x^i) = 0,\\\nonumber
\dfrac{\partial \Lambda}{\partial z^j} = 0 & \Rightarrow 2(z^j-z^i)^T + \lambda^jA^j(x^j) = 0.
\end{align}
Hence, for the minimum of \eqref{eq:lagrange-dual-func} to exist, from \eqref{eq:minimizer_of_L} the following must hold:
\begin{align} \label{eq:lagrange-inf-cond}
\lambda^iA^i(x^i) + \lambda^jA^j(x^j) = 0,\\\nonumber
\lambda^iA^i(x^i) = -2(z^i-z^j)^T.
\end{align}
Substituting the value of $(z^i-z^j)$ from \eqref{eq:lagrange-inf-cond} in \eqref{eq:lagrange-func} and using \eqref{eq:lagrange-dual-func}, we have
\begin{equation} \label{eq:lagrange-dual-func-expl-proof}
\begin{split}
    L(\lambda^i, \lambda^j) = & -\frac{1}{4}\lambda^iA^i(x^i)A^i(x^i)^T\lambda^{i \; T} - \lambda^ib^i(x^i)\\
    & - \lambda^jb^j(x^j).
\end{split}
\end{equation}

The dual formulation \eqref{eq:dual-QP} then follows from \eqref{eq:lagrange-inf-cond} and \eqref{eq:lagrange-dual-func-expl-proof}.

\subsection{Proof of Theorem \ref{theorem:safety}}
\label{app:safety-proof}

Let the feasible set of control inputs for \eqref{eq:dual-QP} be $\mathcal{F}_u(x)$.
The control input $u^*(x)$ is feasible for \eqref{eq:dual-QP}, and hence satisfies \eqref{subeq:dual-QP-ncbf}, which implies safety by \eqref{eq:strong-nbf-proof}.
However, it shall be noted that since $u^*$ need not be continuous, the Filippov operator \eqref{eq:filippov-operator-def} could be required to obtain a valid closed loop trajectory.
The control inputs obtained from the Filippov operation might not be feasible for \eqref{eq:dual-QP}, and we might lose the safety property.

The overview of the proof is as follows:
We show that: (1) for any control law chosen from the feasible set of inputs $\mathcal{F}_u(x)$, the control input obtained after applying the Filippov operator is still feasible for \eqref{eq:dual-QP}, and (2) any control input feasible for \eqref{eq:dual-QP} results in a safe trajectory.

Therefore, we first need to show that inputs obtained from the Filippov operator $F$ is feasible for \eqref{eq:dual-QP}.
We can prove this by showing $F[\mathcal{F}_u](x) = \mathcal{F}_u(x)$.

\subsubsection{\texorpdfstring{$F[\mathcal{F}_u](x) = \mathcal{F}_u(x)$}{F[Fu](x)=Fu(x)}}
\label{subsec:app-safety-proof-usc}

Recall, from the definition of the Filippov operator \eqref{eq:filippov-operator-def}, that the Filippov operator applied on a set-valued function makes it closed and convex point-wise and upper semi-continuous.
To show that $F[\mathcal{F}_u](x) = \mathcal{F}_u(x)$, we can equivalently show that $\mathcal{F}_u$ is closed and convex point-wise, and upper semi-continuous, thus making it invariant to the Filippov operator.

By assumption, \eqref{eq:dual-QP} is always feasible.
So, $\mathcal{F}_u(x) \neq \emptyset \; \forall \; x \in \mathcal{S}$.
Further, since $\mathcal{U}$ is convex and compact, $\mathcal{F}_u(x)$ is convex and compact pointwise $\forall \; x \in \mathcal{S}$.
Finally,  we need to show that $\mathcal{F}_u(x)$ is upper semi-continuous.
Consider any sequences $\{x_{(p)}\} \rightarrow x$ and $\{u_{(p)}\} \rightarrow u$, where $u_{(p)}$ are feasible for \eqref{eq:dual-QP} at $x_{(p)}$, i.e. $u_{(p)} \in \mathcal{F}_u(x_{(p)})$.
To show that $\mathcal{F}_u$ is upper semi-continuous, we have to show that $u$ is feasible for \eqref{eq:dual-QP} at $x$, i.e. $u \in \mathcal{F}_u(x)$.
This is not trivial to show because the number of constraints in \eqref{subeq:dual-QP-non-neg} changes depending on $x$.

We now prove some general properties for optimization problems, using which we can conclude upper semi-continuity of $\mathcal{F}_u$.
Consider an optimization problem $O$ of the form $\rho^*(\xi) = \max_v \{\rho(\xi,v): v \in \Pi(\xi,v)\}$, where $\Pi(\xi,v) = \{v: \pi_k(\xi,v) \leq 0, k \in K(\xi)\}$, where $\rho$ is the continuous cost function, $\pi$ is a continuous vector function, $K$ is an index set, and the inequality is applied element-wise.
Let $\Pi(\xi,v)$ be non-empty and uniformly bounded in $\xi$, and let $v^*(\xi)$ be an optimal solution at $\xi$.
Note that the constraints enforced in the optimization can vary with $\xi$ since the index set $K$ depends on $\xi$.
Here are the three properties as follows,
\begin{enumerate}[label=(\alph*)]
    \item \label{prop:app-safety-index-subset} If the index set is reduced, the optimal cost increases:
    
    Define $\bar{\Pi}(\xi,v) = \{v: \pi_k(\xi,v) \leq 0, k \in \bar{K}\}$, where $\bar{K} \subseteq K(x) \; \forall \xi$.
    Define the corresponding optimization problem $\bar{O}$ as $\bar{\rho}^*(\xi) = \max_v \{\rho(\xi,v): v \in \bar{\Pi}(\xi,v)\}$, with optimal solution $\bar{v}^*(\xi)$.
    Since $\Pi(\xi, v) \subseteq \bar{\Pi}(\xi,v)$, $\bar{\rho}^*(\xi) \geq \rho^*(\xi) \; \forall \xi$.
    
    \item \label{prop:app-safety-continuity} Continuity property of optimal cost $\bar{\rho}^*$:
    
    Consider a sequence $\{\xi_{(p)}\} {\rightarrow} \xi$ and optimal solutions $\{\bar{v}^*(\xi_{(p)})\} {\rightarrow} \bar{v}$.
    By continuity of $\pi$ and $\rho$, $\pi_{\bar{K}}(\xi,\bar{v}) = \lim_{p\rightarrow \infty} \pi_{\bar{K}}(\xi_{(p)}, \bar{v}^*(\xi_{(p)})) \leq 0$, i.e. $(\xi,\bar{v})$ is a feasible solution for $\bar{O}$, and $ \bar{\rho}^*(\xi) \geq \rho(\xi,\bar{v}) = \lim_{p\rightarrow \infty} \rho(\xi_{(p)}, \bar{v}^*(\xi_{(p)})) = \lim_{p\rightarrow \infty} \bar{\rho}^*(\xi_{(p)})$.
    
    \item \label{prop:app-safety-limit-index} Limit property of the index set, $K(\xi)$:
    
    Let $K(\xi)$ be of the form $K(\xi) = \{k: \eta_k(\xi) < \epsilon\}$, where $\eta$ is a continuous, non-negative vector function and $\epsilon > 0$.
    Consider a sequence $\{\xi_{(p)}\} {\rightarrow} \xi$.
    By continuity of $\eta$, $\exists P \in \N$ such that $\forall p \geq P$, and $\forall k \in K(\xi)$, $\eta_k(\xi_{(p)}) < \epsilon$.
    Thus, $K(\xi) \subseteq K(\xi_{(p)}) \; \forall p \geq P$.
\end{enumerate}

We now use the above properties to show upper semi-continuity of $\mathcal{F}_u$, which is the feasible set of inputs for \eqref{eq:dual-QP}.
The variables $\xi$ and $v$ in the optimization problem $O$ correspond to the pair $(x, u)$ and $\dot{\lambda}$ in \eqref{eq:dual-QP} respectively.
Define the index set as $K(y) = \{(k^i, k^j): \lambda^{*i}_k(y) < \epsilon_2, \lambda^{*j}_k(y) < \epsilon_2 \} \subseteq [r^i] \times [r^j]$ corresponding to \eqref{subeq:dual-QP-non-neg}.
Note that $\lambda^{*i}$ and $\lambda^{*j}$, corresponding to $\eta$, are continuous functions (Sec. \ref{subsec:ncbf-dual-formulation}).


Now consider the following optimization problem, denoted as LP$(x,y,u)$ relating $O$ to \eqref{eq:dual-QP}:
\begin{subequations} \label{eq:app-safety-g-def}
\begin{align}
  g(x,y,u) {=} & \max_{\{\dot{\lambda}^i,\dot{\lambda}^j\}} \dot{L}(t,\lambda^{*i}(x),\lambda^{*j}(x),\dot{\lambda}^i,\dot{\lambda}^j,u) \label{subeq:app-safety-g-def-cost} \\
    \text{s.t.} \quad & \dot{\lambda}^iA^i(x) {+} \lambda^{*i}(x)(\mathcal{L}_{f^i}A^i(x) {+} \mathcal{L}_{g^i}A^i(x)u) \hspace{-1pt} \label{subeq:app-safety-g-def-plane} \\\nonumber
    & \quad {+} \lambda^{*j}(x)(\mathcal{L}_{f^j}A^j(x) {+} \mathcal{L}_{g^j}A^j(x)u) {=} {-}\dot{\lambda}^jA^j(x) \\
    & \dot{\lambda}^i_k  \geq 0 \quad \text{if } k \in K^i(y), \label{subeq:app-safety-g-def-non-neg} \\\nonumber
    & \dot{\lambda}^j_k  \geq 0 \quad \text{if } k \in K^j(y), \\
    & |\dot{\lambda}^{i}| \leq M, |\dot{\lambda}^{j}| \leq M. \label{subeq:app-safety-g-def-bounds}
\end{align}
\end{subequations}
Note that the cost \eqref{subeq:app-safety-g-def-cost} of LP$(x,y,u)$ corresponds to the LHS of \eqref{subeq:dual-QP-ncbf}, and the constraints \eqref{subeq:app-safety-g-def-plane}-\eqref{subeq:app-safety-g-def-bounds} correspond to \eqref{subeq:dual-QP-plane}-\eqref{subeq:dual-QP-bounds}.
The parameter $y$ in LP$(x,y,u)$ only affects the index set in \eqref{subeq:app-safety-g-def-non-neg}.
Thus, $u$ if feasible at $x$ only if $g(x,x,u) \geq -\alpha(h(x)-\epsilon_1^2)$.
Also, since $u_{(p)}$ is feasible at $x_{(p)}$, $g(x_{(p)},x_{(p)},u_{(p)}) \geq -\alpha(h(x_{(p)})-\epsilon_1^2)$.

The optimization LP$(x,x,u)$ corresponds to the optimization problem $O$ as defined above.
Note that by continuity of $A^i$, $A^j$, $b^i$, $b^j$, $\lambda^{*i}$, and $\lambda^{*j}$, the cost \eqref{subeq:app-safety-g-def-cost} and constraint matrices \eqref{subeq:app-safety-g-def-plane}-\eqref{subeq:app-safety-g-def-bounds} are continuous, similar to $O$.
Moreover, the feasible set of \eqref{eq:app-safety-g-def} is non-empty (by the assumption in Thm. \ref{theorem:safety}) and is uniformly bounded due to \eqref{subeq:app-safety-g-def-bounds}.
Thus, the optimization problem LP$(x,x,u)$ fits the framework of $O$.

By Property~\ref{prop:app-safety-limit-index}, $\exists P {\in} \N$ such that $K(x) \subseteq K(x_{(p)}) \; \forall p \geq P$.
Truncate the sequences $\{x_{(p)}\}$ and $\{u_{(p)}\}$ before $P$.
Define the optimization problem LP$(x_{(p)}, x, u_{(p)})$ corresponding to $\bar{O}$.
By Property~\ref{prop:app-safety-index-subset}, $g(x_{(p)}, x, u_{(p)}) \geq g(x_{(p)}, x_{(p)}, u_{(p)}) \; \forall p$.
Let the optimal solution to LP$(x_{(p)}, x_{(p)}, u_{(p)})$ be $\dot{\lambda}_{(p)}$.
Since $\dot{\lambda}_{(p)}$ is uniformly bounded (by \eqref{subeq:app-safety-g-def-bounds}), by Bolzano-Weierstrass Theorem, there exists a sub-sequence $\dot{\lambda}_{(q)}$ such that it converges to $\dot{\lambda}$.
We now restrict the analysis to the $q$ sequence.
By Property~\ref{prop:app-safety-continuity} for $\bar{O}$, we have $g(x,x,u) \geq \lim_{q \rightarrow \infty} g(x_{(q)},x,u_{(q)})$.

Combining all the inequalities from above, and by continuity of $h(x)$ (Lem. \ref{lem:differentiable-separating-vector}),
\begin{align*}
    g(x,x,u) & \geq \lim_{q \rightarrow \infty} g(x_{(q)},x,u_{(q)}) \\
    & \geq \lim \sup_{q \rightarrow \infty} g(x_{(q)},x_{(q)},u_{(q)}) \\
    & \geq \lim \sup_{q \rightarrow \infty} -\alpha(h(x_{(q)})-\epsilon_1^2) = -\alpha(h(x)-\epsilon_1^2).
\end{align*}
Thus, $u$ is feasible at $x$ and $\mathcal{F}_u(x)$ is upper-semicontinuous.

\subsubsection{Safety}
\label{subsec:app-safety-proof-safety}

Finally, we show that the Filippov operator does not affect the safety of the system.
Let $u^*(x)=(u^{*i}(x),u^{*j}(x)) \in \mathcal{F}_u(x)$ be any measurable feedback control law, obtained as the solution of \eqref{eq:dual-QP}.
Then, taking the Filippov operator,
\begin{equation*}
F[u^{*}](x) \subset F[\mathcal{F}_u](x) = \mathcal{F}_u(x).
\end{equation*}
Given $x(0)$ such that $h(0) > \epsilon_1^2$, a closed loop trajectory $(x^i(t),x^j(t)), t \in [0,T]$ can be obtained using the Filippov solution, where $(x^i(t),x^j(t))$ are absolutely continuous in $[0,T]$ and for almost all $t \in [0,T]$, satisfy,
\begin{align*}
\dot{x}^i(t) & = f^i(x(t)) + g^i(x(t))u^i(t) \\
\dot{x}^j(t) & = f^j(x(t)) + g^j(x(t))u^j(t),
\end{align*}
where $u(t) = (u^i(t),u^j(t)) \in F[u^{*}](x(t)) \subset \mathcal{F}_u(x(t))$.
Note that the constraints of \eqref{eq:dual-QP} are stronger than those of \eqref{eq:deriv-h-dual-QP}, meaning that for any $(u,\dot{\lambda}^i, \dot{\lambda}^j)$ feasible for \eqref{eq:dual-QP}, $(\dot{\lambda}^i, \dot{\lambda}^j)$ is feasible for \eqref{eq:deriv-h-dual-QP}, and by \eqref{subeq:dual-QP-ncbf} and Lem.~\ref{lem:deriv-h-dual-QP}, $(\dot{\lambda}^i, \dot{\lambda}^j)$ satisfies \eqref{eq:strong-nbf-proof}.
Since $u(t)$ is feasible for \eqref{eq:dual-QP}, $\exists \; (\dot{\lambda}^i(t),\dot{\lambda}^j(t))$ such that the closed loop trajectory satisfies \eqref{subeq:dual-QP-ncbf}:
\begin{equation*}
\dot{L}(t,\lambda^{*i}(t),\lambda^{*j}(t),\dot{\lambda}^i(t),\dot{\lambda}^j(t)) \geq  -\bar{\alpha}(h(t)-\epsilon_1^2),
\end{equation*}
for almost all $t \in [0,T]$.
By \eqref{eq:strong-nbf-proof}, $\dot{h}(t) \geq -\alpha(h(t)-\epsilon_1^2)$ for almost all $t \in [0,T]$.
By Lem. \ref{lem:nbf-def}, $h(t) > \epsilon_1^2 \; \forall \; t \in [0,T]$, and thus the system remains safe.

\end{document}